\newtheorem{theorem}{Theorem}[section]
\newtheorem{lemma}[theorem]{Lemma}
\newtheorem{corollary}[theorem]{Corollary}
\newtheorem{claim}[theorem]{Claim}
\newtheorem{proposition}[theorem]{Proposition}
\newtheorem{definition}[theorem]{Definition}
\newtheorem{remark}[theorem]{Remark}
\newtheorem{fact}[theorem]{Fact}
\newtheorem*{rep@theorem}{\rep@title}
\newcommand{\newreptheorem}[2]{%
\newenvironment{rep#1}[1]{%
\def\rep@title{#2 \ref{##1}}%
\begin{rep@theorem}}%
{\end{rep@theorem}}}
\newenvironment{proofsketch}{\noindent{\bf Proof Sketch.}}%
{\hspace*{\fill}$\Box$\par}
\newenvironment{proofof}[1]{\smallskip\noindent{\bf Proof of #1.}}%
{\hspace*{\fill}$\Box$\par}
\newcommand{\pref}{\prettyref}
\newcommand{\E}{{\mathbb{E}}}
\newcommand{\eps}{\varepsilon}
\newcommand{\cD}{\mathcal{D}}
\newcommand{\cP}{\mathcal{P}}
\newcommand{\cS}{\mathcal{S}}
\newcommand{\cQ}{\mathcal{Q}}
\newcommand{\cU}{\mathcal{U}}
\newcommand{\cT}{\mathcal{T}}
\newcommand{\Patrascu}{P\u{a}tra\c{s}cu}
\newcommand{\ip}[1]{\left\langle #1 \right\rangle}
\newcommand{\Field}{\mathbb{F}}
\newcommand{\abs}[1]{\left| #1 \right|}
\newcommand{\norm}[1]{\| #1 \|}
\newcommand{\KL}[2]{\mathsf{D}_{KL} ( {#1} || {#2} )}
\newcommand{\poly}{{\operatorname{poly}\xspace}}
\newcommand{\DISJ}{\mathsf{DISJ}}
\newcommand{\EC}[1]{}
\DeclareMathOperator*{\argmax}{arg\,max}
\title{Lower Bounds for Linear Operators}
\author{Young Kun Ko}
\affil[]{Department of Computer Science and Engineering, Pennsylvania State University}
\affil[]{Email: ykko@psu.edu}
\date{\today}
\begin{document}

\maketitle

\begin{abstract}



    We consider a static data structure problem of computing a linear operator under cell-probe model. Given a linear operator $M \in \mathbb{F}_2^{m \times n}$, the goal is to pre-process a vector $X \in \mathbb{F}_2^n$ into a data structure of size $s$ to answer any query $\ip{M_i , X}$ in time $t$. We prove that for a random operator $M$, any such data structure requires: 
    $$
        t \geq \Omega ( \min \{ \log (m/s) , n / \log s \} ).
    $$
    This result overcomes the well-known logarithmic barrier in static data structures \cite{miltersen_data_1998, siegel_universal_2004,  patrascu_logarithmic_2006, panigrahy_geometric_2008, patrascu_unifying_2011, dvir_static_2019} by using a random linear operator. Furthermore, it provides the first significant progress toward confirming a decades-old folklore conjecture: that non-linear pre-processing does not substantially help in computing most linear operators.

    A straightforward modification of our proof also yields a wire lower bound of $\Omega(n \cdot \log^{1/d}(n))$ for depth-$d$ circuits with arbitrary gates that compute a specific linear operator $M \in \mathbb{F}_2^{O(n) \times n}$, even against some small constant advantage over random guessing. This bound holds even for circuits with only a small constant advantage over random guessing, improving upon longstanding results \cite{raz_lower_2003, cherukhin_lower_2008, hirsch_lower_2008, gal_tight_2013} for a random operator.

    Finally, our work partially resolves the communication form of the Multiphase Conjecture~\cite{patrascu_mihai_towards_2010} and makes progress on Jukna-Schnitger's Conjecture \cite{jukna_min-rank_2011, jukna_boolean_2012}. We address the former by considering the Inner Product (mod 2) problem (instead of Set Disjointness) when the number of queries $m$ is super-polynomial (e.g., $2^{n^{1/3}}$), and the total update time is $m^{0.99}$. Our result for the latter also applies to cases with super-polynomial $m$. 
    
\end{abstract}

\newpage

\section{Introduction}

A fundamental, long-standing question in the theory of computation is:
\begin{quote}
    Can non-linear computation provide an advantage in computing a linear operator?
\end{quote}

The intuition that non-linear computation offers no significant help has persisted for over 70 years since the seminal work in circuit design by Shannon~\cite{shannon_synthesis_1949}. For instance, Lupanov~\cite{lupanov_rectifier_1956} and later Valiant~\cite{goos_graph-theoretic_1977, valiant_why_1992} drew on this idea to study the complexity of linear operators. This belief underpins major open problems like matrix rigidity--the search for a linear operator that cannot be computed by small, shallow circuits with linear gates (See the surveys \cite{lokam_complexity_2009, ramya_recent_2020} and references therein). Despite its widespread acceptance, this intuition has remained a folklore belief. Only recently did Jukna and Schnitger
\cite{jukna_min-rank_2011, jukna_boolean_2012} formalize it as an informal conjecture: {\bf non-linear gates do not help in computing a linear operator}.

This conjecture extends naturally to the {\bf cell-probe model} \cite{yao_should_1981}, the most powerful model of computation for data structures. In this model, we only count memory accesses (probes), while all computation is free. An input is pre-processed into a data structure of $s$ cells (here, we assume 1-bit cells, as we work over $\Field_2$). To answer a query, an algorithm can probe up to $t$ cells and perform arbitrary computation on their contents. Because of its strength, a lower bound in this model applies to any reasonable data structure.

This raises an analogous question for data structures: for a cell-probe data structure designed to compute $\ip{M_i, X}$ for a linear operator $M$, can non-linear pre-processing of the input $X$ reduce the query time? This can be seen as a cell-probe analogue of the Jukna-Schnitger conjecture (See Section 2 of \cite{dvir_static_2019} for detailed discussion).

Assuming this intuition is true has led to fruitful research on restricted models where pre-processing is linear or from related classes (See \cite{fredman_lower_1981, chazelle_lower_1990, larsen_range_2014, loebl_simplex_2017, dvir_static_2019, afshani_lower_2019, golovnev_polynomial_2022} and references therein). These models are also deeply connected to matrix rigidity \cite{dvir_static_2019, natarajan_ramamoorthy_equivalence_2020}. However, challenging this intuition, there are cases where highly efficient non-linear data structures exist for linear problems with no known linear equivalents \cite{kedlaya_fast_2011}.

Our primary goal is to rigorously establish that for {\bf most} linear operators, {\bf non-linear pre-processing provides no significant advantage}. This, in turn, implies an analogous result for circuit complexity.

\paragraph{Previous Results}

Progress on this front has been stalled by technical barriers in the cell-probe model. The primary tool for static cell-probe lower bounds is essentially a counting argument. A seminal result by Miltersen~\cite{miltersen_bit_1993} shows that for almost all non-linear problems, if the space $s$ is slightly smaller than the number of possible outputs $m$ (e.g., $s=m^{0.99}$), then the query time must be large ($t \geq \Omega ( n^{0.99} )$). If we {\em assume} linear pre-processing is optimal for linear problems, a similar counting argument yields strong lower bounds.

However, this argument breaks down completely when arbitrary non-linear pre-processing is allowed. The number of possible functions to compute even a single pre-processed bit is astronomical ($2^{2^n}$), rendering a simple counting argument over all possible data structures useless.

Prior to our work, the best lower bound for an arbitrary linear problem was derived from techniques used for explicit data structure problems \cite{miltersen_data_1998, siegel_universal_2004, patrascu_logarithmic_2006, panigrahy_geometric_2008, patrascu_unifying_2011}. These techniques hit a ceiling known as the logarithmic barrier, yielding bounds of the form:
\begin{equation} \label{eq:sampling}
t \geq \Omega \left( \frac{ \log \frac{|\mathcal{Q}|}{n}}{ \log \frac{s}{n} } \right).
\end{equation}
where $|\mathcal{Q}| = m$ is the number of distinct queries. Breaking this barrier for an explicit data structure is a holy-grail problem, with connections to major open questions in branching programs and circuit complexity \cite{miltersen_data_1998, dvir_static_2019, viola_lower_2018}.

In summary, it was previously unknown if any linear problem was truly hard against non-linear data structures. For decades, it was not ruled out that highly efficient non-linear data structures (e.g., with $t = O( 1 )$ and $s = |\cQ|^{0.1}$) might exist for all linear problems.

\subsection{Further Connections} 

\subsubsection{Multiphase Program: Communication versus Cell-Probe}

A more recent motivation for our work is its connection to the communication version of the Multiphase Conjecture \cite{patrascu_mihai_towards_2010, thorup_mihai_2013, braverman_communication_2022}, a problem in 3-player Number-on-Forehead (NOF) communication.

\begin{definition}[Multiphase Communication Game \cite{patrascu_mihai_towards_2010}] The 3-player Number-On-Forehead (NOF) communication game is defined as follows
    \begin{itemize}
        \item The inputs are distributed as follows: Merlin receives the linear operator $M \in \{0,1\}^{m \times n}$ and the vector $X \in \{0,1\}^n$. For a given query index $i \in [m]$, Alice receives $M$ and $i$, while Bob receives $T$ and $i$.
        \item Merlin sends a single-shot message of length $m^{0.99}$ to Bob. 
        \item Alice and Bob proceed in standard two-party communication to output $\ip{M_i, X}$ \footnote{Originally, the conjecture was phrased in Merlin sending a message of length $o(m)$, then Alice and Bob proceeding to output $\DISJ(M_i,X)$. However, analogous implications hold from this slightly weaker conjecture} for any given $i \in [m]$.
    \end{itemize}
\end{definition}
The Multiphase Conjecture (Communication Version) then states that for some $m = \poly(n)$, Alice and Bob then must communicate $n^{\eps}$ for some constant $\eps > 0$. The original motivation in \cite{patrascu_mihai_towards_2010} behind the conjecture is its consequence in dynamic cell-probe lower bound. 
\begin{definition}[Multiphase Problem] Consider the following explicit dynamic data structure problem
    \begin{itemize}
    \item $M$ is given as a pre-processing input. Pre-process the data structure using $|M| \cdot t_u$ time.
    \item Then $X$ as a sequence is given as updates, updating the data structure using $|X| \cdot t_u$ total time.
    \item For any $i \in [m]$, the data structure must be able to output $\ip{M_i,X}$ in $t_q$ time.
    \end{itemize}
\end{definition}
The key observation in \cite{patrascu_mihai_towards_2010} is that a polynomial lower bound for the Multiphase Game would imply a polynomial lower bound on $\max \{ t_u, t_q \}$ for the Multiphase Problem, which is also known as the Multiphase Conjecture (Cell-Probe Version). The word ``Multiphase Conjecture" has been used interchangeably to denote either the communication version of the conjecture or the cell-probe version of the conjecture. 


To avoid any confusion, we will denote the communication conjecture as Multiphase Conjecture (Communication), the dynamic data structure conjecture as Multiphase Conjecture (Cell-Probe), and making progress on either one of these conjectures as the Multiphase Program. The main goal for the rest of the section is the following: (i) group the previous results depending on which conjecture they make a progress on; then (ii) illustrate why the Multiphase Conjecture (Communication) is a {\bf different ball game} compared to the Multiphase Conjecture (Cell-Probe).

\paragraph{Previous Results on the Multiphase Program}
Despite its significance, the Multiphase Program had limited progress for the first 10 years after its inception namely \cite{chattopadhyay_little_2012, clifford_new_2015, brody_adapt_2015}. \cite{clifford_new_2015, brody_adapt_2015} tackle the Multiphase Conjecture (Cell-Probe), giving a lower bound of $\max \{ t_u, t_q \} \geq \Omega ( \log n )$. But it should be noted that their bounds do not carry over to the Multiphase Conjecture (Communication) due to the limitation of the technique involved.

There are results on the Multiphase Conjecture (Communication) as well. \cite{chattopadhyay_little_2012} first tried to tackle the Multiphase Conjecture (Communication), but their argument only works for a very restricted model of communication. Only recently \cite{ko_adaptive_2020} developed a tool to tackle the Multiphase Conjecture (Communication) for two rounds of communication between Alice and Bob. This was extended to a more general class of functions by \cite{dvorak_lower_2020}. 

\paragraph{Separation between Communication and Cell-Probe} 

An intuitive high-level explanation on why tackling the Multiphase Game is a harder task is the following. Recall that in the reduction due to \cite{patrascu_mihai_towards_2010}, Merlin assumes the role of updates, and Alice assumes the role of query algorithm. But in the Multiphase Game, (a) Merlin only needs to send cells written by the update algorithm, as Merlin knows both $M$ and $X$. We do not charge for reading the cells during the update; (b) Alice is given the linear operator $M$. Thus, communication is needed for Alice to learn about $X$ to compute $\ip{M_i,X}$. On the other hand, if we were to directly approach the Multiphase Problem and give a cell-probe lower bound, the counting arguments in \cite{clifford_new_2015, brody_adapt_2015} crucially rely on the fact that (a) the update algorithm has no prior knowledge on $M$, thus must probe into pre-processed cells to learn $M$; (b) the query algorithm has no prior knowledge on both $M$ and $X$. The query algorithm must learn about both $M$ and $X$ by probing into the pre-processed and updated cells. 

An explicit example of the separation between the two models exists as well. There exists an explicit function that is easy in the communication model while hard in dynamic cell-probe model, separating the two models. Consider the setting where each $M_i$ has a single non-zero entry. This is so-called indexing problem, as each $M_i$ only asks which coordinate of $X$ to output. \cite{brody_adapt_2015} (Theorem 1) shows that if $t_u \leq o ( m )$, $t_q \geq \Omega ( n )$ for so-called non-adaptive query. This easily translates to $t_q \geq \Omega ( \log n )$ for general query, by simulating all possible adaptive queries in a single non-adaptive query. 

We will now show why we cannot expect to show such statement for the indexing problem in the communication model. First of all, Merlin's message is too long. We cannot show an analogous statement in the communication model, as $n \cdot t_u$, which would be the length of Merlin's message, is already larger than $m$. If Merlin is allowed to send a message of length $m$, Merlin can simply write the answers for all possible queries. Bob then only needs to a single bit to announce the answer. Notice that this stems from the difference listed as (a). 

Even if we relax the condition on the update to say $n \cdot t_u \leq o (m)$, thereby Merlin sending a message of length $o(m)$, we still run into separation, as we cannot give $t_q \geq \Omega ( \log n )$ for indexing problem. Such a result would contradict \cite{drucker_limitations_2012} which combined with the argument from \cite{viola_lower_2018} gives a communication protocol with $O(k^3)$ communication between Alice and Bob while Merlin only gives an advice of length $O(n/k)$ when $m = O(n)$.\footnote{One could also obtain $O(k^2)$ communication with advice of length $O \left( \frac{n \log n}{k \log \log n} \right)$} This shows that even if we give (a) for free, (b) is still crucial, separating the two model.

Now that we have established the fact that the Communication model is provably stronger than its dynamic cell-probe analogue, we would like to discuss evidences on how hard can it be, which has direct connection to our problem. Ko and Weinstein \cite{ko_adaptive_2020} observed that the Multiphase Communication Game can simulate a static cell-probe data structure for some random linear operator $M$ over the input $X$ -- the main problem of our interest.

Merlin sends all the contents of the pre-processing. Alice then simulates the query using the two-party communication between Bob. This is possible as Alice knows $M$, the linear operator in question. Therefore, resolving the Multiphase Conjecture (Communication) would imply Jukna-Schnitger Conjecture in its cell-probe analogue. In particular, the Multiphase Conjecture (Communication) implies a static cell-probe lower bound of the form 
\begin{equation}
    t \geq \left( \frac{|\mathcal{Q}|}{s} \right)^{c}
\end{equation}
for some small constant $c > 0$ against arbitrary linear operators with $|\cQ|$-many outputs. Note that this is a super-exponential improvement over \pref{eq:sampling}. Therefore, Ko and Weinstein (see Section 5 of \cite{ko_adaptive_2020}) phrased the connection as a {\bf major setback} for the full resolution of Multiphase Conjecture (Communication), rather than adding the stakes for the Multiphase Conjecture (Communication). We emphasize that no such connection arises for the Multiphase Conjecture for the Multiphase Conjecture (Cell-Probe). 

\subsubsection{Other Connections}

Our work also connects to index coding with side information~\cite{bar-yossef_index_2011}, the complexity of error-correcting codes~\cite{gal_tight_2013}, and function inversion in cryptography~\cite{hofheinz_function-inversion_2019}. These diverse connections all stem from the same high-level question: can non-linear computation help compute linear functions?

\subsection{Our Result}

Our main contribution is a query time lower bound that breaks the logarithmic barrier for a random linear operator:
\begin{equation} \label{eq:trade-off_ours} 
t \geq \Omega \left( \log \frac{|\mathcal{Q}|}{s} \right) 
\end{equation}
This provides the first formal evidence supporting the half-century-old intuition that non-linear pre-processing offers little advantage for most linear operators. For example, when space $s=|\cQ|^{0.99}$ (storing nearly all answers), previous bounds gave only a trivial $t \geq \Omega(1)$. Our bound gives a much stronger $t \geq \Omega (\log|\cQ|)$. This affirms the cell-probe version of Jukna-Schnitger's Conjecture when the number of queries is super-polynomial.

Formally, we prove the following statement:
\begin{theorem}[Main] \label{thm:informal_main} 
For every $t, s, m \geq \omega(1)$ and a collection $\cS \subset \Field_2^n$ such that $m \geq \omega ( s \cdot 2^{5 t} )$, and $\log |\cS| \geq 10^4 \cdot t \log s$,  there must exist some $m$ linear functions from $\cS$ say $\cQ$ that does not admit a data structure using $s$-space and $t$-probes per query. 
\end{theorem}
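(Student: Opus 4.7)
The plan is to prove the theorem by a probabilistic / encoding argument that exploits the $\mathbb{F}_2$-linearity of the queries.

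I would pick $\cQ$ uniformly at random as an ordered $m$-tuple from $\cS$ and aim to show that the probability a good size-$s$, time-$t$ data structure exists for $\cQ$ is strictly below $1$, thereby yielding the desired hard $\cQ$. A naive union bound over all preprocessing maps $T : \mathbb{F}_2^n \to \mathbb{F}_2^s$ is hopeless (there are $2^{s \cdot 2^n}$ such maps), so I would work instead with the coarser invariant
\[
\cF_T \;:=\; \bigl\{\, \phi \in \cS \;:\; \exists \text{ a depth-}t \text{ decision tree } D \text{ with } D(T(X)) = \langle \phi, X \rangle \text{ for every } X \in \mathbb{F}_2^n \,\bigr\}.
\]
A good data structure for $\cQ$ exists if and only if $\cQ \subseteq \cF_T$ for some $T$; in that case $\cQ$ admits an encoding consisting of (a) the subset $\cF_T \subseteq \cS$ of size $L$, costing $\log\binom{|\cS|}{L}$ bits, and (b) $m$ indices into $\cF_T$, costing $m \log L$ bits. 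A short calculation shows that this total falls below the $m \log |\cS|$ bits of entropy of random $\cQ$ whenever $m \ge 2L$ and $\log(|\cS|/L)$ is bounded away from zero, so by pigeonhole some $\cQ$ admits no such data structure.

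The core technical task is a structural bound $L = |\cF_T| \le s \cdot 2^{O(t)}$ uniform in $T$, with hidden constant at most $5$. Without invoking linearity, the count $(2s)^{2^t}$ of distinct depth-$t$ decision trees over $s$ cells is catastrophic; the $\mathbb{F}_2$-linearity of the composed output must collapse it drastically. My plan is to argue this by induction on $t$. At $t = 0$ only two constant functions appear per $T$. For the inductive step, a depth-$t$ tree branches at its root on one of $s$ cells $c_j$ into two depth-$(t-1)$ subtrees whose compositions with $T$ must glue into a single $\mathbb{F}_2$-linear functional of $X$; this gluing constraint severely restricts how the two subtree outputs may be chosen independently, and I expect a recursion of roughly $L(t) \le s \cdot C \cdot L(t-1)$ for a small constant $C$, telescoping to $L(t) \le s \cdot 2^{O(t)}$. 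Matching the implicit constant to the $2^{5t}$ in the hypothesis then closes the argument, with $\log|\cS| \ge 10^4 \cdot t \log s$ serving to suppress the lower-order term $m \log L$ below $m \log |\cS|$.

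The principal obstacle is the structural bound itself: in full generality, $|\cF_T|$ can in fact exceed $s \cdot 2^{5t}$ --- for instance, when $T$ is the identity on $\mathbb{F}_2^n$ and $\cS$ contains every linear functional of support at most $t$, already $|\cF_T| = \binom{n}{\le t}$, which can dwarf $s \cdot 2^{5t}$ even at $s = n$. Overcoming this requires either a more refined notion of ``type'' of a preprocessing map (so the bound holds per type while the number of types stays under control) or a weighted union / encoding that pays only a small cost for rare problematic $T$. The hypothesis $\log|\cS| \ge 10^4 \cdot t \log s$ should play a role here, ensuring $\cS$ is spread out enough that no single ``low-complexity'' $\cF_T$ can cover a substantial fraction of it. Crafting the correct invariant --- and making the inductive recursion propagate cleanly through all $t$ levels while respecting the right constant --- is the heart of the proof.
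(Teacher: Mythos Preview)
Your approach has a genuine gap at its core: the structural bound $|\cF_T| \le s \cdot 2^{O(t)}$ is false, as your own counterexample shows, and it stays false under the theorem's hypotheses. Take $\cS = \mathbb{F}_2^n$, $s = n$, $T$ the identity, and $t = n/(10^4 \log n)$; then $\log|\cS| = n = 10^4 \cdot t \log s$ holds, yet $\cF_T$ contains every functional of support at most $t$, so $|\cF_T| \ge \binom{n}{t} \approx (10^4 e \log n)^t$, which dwarfs $s \cdot 2^{5t} = n \cdot 32^t$ as soon as $\log\log n > 5$. Your fallback suggestions (typing the preprocessing maps, weighted encodings) are not a plan: the space of maps $T$ has $2^{s \cdot 2^n}$ elements and no evident coarsening controls it while preserving the covering property. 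The paper's introduction makes exactly this point --- counting arguments over non-linear data structures are the natural first idea, and they break down completely.

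The paper's proof has nothing to do with counting or encoding $\cQ$. It assumes a data structure exists for \emph{every} $\cQ \in \cS^m$ and derives a contradiction from a communication game. Merlin (knowing $\vec{M},X$) sends the $s$-bit table $U$ to Bob; Alice (knowing $\vec{M},i$) would normally simulate the $t$-probe query via $t$ rounds with Bob, but instead she \emph{compresses} the exchange by sending Bob the single most likely root-to-leaf path of her decision tree (probability $\ge 2^{-t}$ over $X$); Bob checks the path against $U$, outputs the leaf label if consistent, and guesses randomly otherwise. This one-round protocol has advantage $\ge 2^{-t}$ in predicting $(-1)^{\langle M_i,X\rangle}$. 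Picking a random block $\cP$ of $\ell = m/100$ query indices and a random position $J$ in it, an information-theoretic analysis (using average min-entropy rather than Shannon entropy, which is the paper's technical novelty) shows that conditioned on the transcript one has simultaneously (i) $\tilde{H}_\infty(M_{\cP_J}) + \tilde{H}_\infty(X) \ge \log|\cS| + n - O(t\log s)$, (ii) $I(M_{\cP_J};X) \le O(s/\ell)$, and (iii) advantage $\ge 2^{-t}$. A discrepancy argument (Lindsey's lemma for the Hadamard matrix) rules out the coexistence of these three once $\log|\cS| \gg t\log s$ and $s/\ell \ll 2^{-2t}$, which are precisely the hypotheses. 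The argument never enumerates preprocessing maps; the entire weight is carried by the small-advantage analysis, which is where the $2^{5t}$ and $10^4 \cdot t\log s$ constants come from.
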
 
\noindent When applied to the set of all linear functions $\cS = \Field_2^n$, this theorem yields our main trade-off.

\subsubsection{Implications}

Our result has direct implications in three key areas:

\paragraph{Static Data Structures}
We establish the existence of ``hard" linear operators, providing the first non-trivial lower bounds against arbitrary non-linear pre-processing in the high-space regime when $s$ is $|\cQ|^{0.99}$.
This is a significant improvement over any existing arguments, which fail in this setting.

\paragraph{Circuit Complexity}

Our data structure bounds translate to new circuit lower bounds. Consider Valiant's depth-2 circuit with {\bf arbitrary} gates \cite{goos_graph-theoretic_1977} computing a linear operator $x \mapsto M x$ with $M \in \Field_2^{m \times n}$, $m^{0.99}$-gates in the middle layer, and at most $\tau$-wires per output gate. Our result directly implies that if $t \leq o ( \sqrt{n} )$ and $m \geq 2^{ O(t) }$, there must exist a linear operator $M$ that requires $\tau \geq \Omega(t)$. This partially resolves an open problem of Jukna-Schnitger~\cite{jukna_min-rank_2011, jukna_boolean_2012} on whether arbitrary gates can help in computing an arbitrary linear operator for a super-polynomial number of outputs.

Furthermore, if we turn to a general depth $d$ circuit with arbitrary gates (unbounded fan-in), we can also replace the inverse Ackermann-type lower bounds due to \cite{gal_tight_2013} on linear operators with poly-logarithmic decay in depth $d$.
Our bound also supersedes the best-known explicit bound, which are not linear operators, due to \cite{raz_lower_2003, cherukhin_lower_2008, hirsch_lower_2008} when $d \geq 4$. See \pref{sec:viola} for details.

\paragraph{Multiphase Communication Game}
A general way to phrase our result is a progress in the Multiphase Conjecture (Communication). 
While the notation of our proof is used specifically to handle the static data structure lower bound, 
a careful examination of our proof shows that our result indeed resolves the Multiphase Conjecture (Communication)~\cite{patrascu_mihai_towards_2010} when the number of possible queries $m$ is super-polynomial. 

Since any progress on the Multiphase Conjecture (Communication) implies a corollary for the Multiphase Conjecture (Cell-Probe), our lower bound then implies the following corollary for the Multiphase Problem.
\begin{corollary} \label{cor:multiphase}
    For the Multiphase Problem, if the total update time $n \cdot t_u \leq O( m^{0.99})$ (i.e. slightly less than writing down answers to all possible queries) then the query time must be $t_q \geq \Omega ( n^{1/2} )$
\end{corollary}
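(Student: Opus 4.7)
The plan is to chain the main theorem with two reductions. First, I apply \Patrascu's standard reduction from the Multiphase Problem (cell-probe) to the Multiphase Communication Game: any dynamic data structure with update time $t_u$ and query time $t_q$ yields a communication protocol in which Merlin (who knows both $M$ and $X$) sends Bob a message of length $\Otilde(n \cdot t_u)$ describing only the cells that were modified during the update phase together with their post-update values, and then Alice and Bob simulate the query algorithm with $O(t_q)$ bits of communication. Here Alice uses her knowledge of $M$ to fill in the contents of preprocessed cells that were never overwritten, while Bob (who receives Merlin's message and holds $X$) supplies the content of each probed cell that was touched during updates (one bit per probe).

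Next, I invoke the Ko--Weinstein observation (Section 5 of \cite{ko_adaptive_2020}) that any Multiphase Communication protocol for a linear operator $M$ can be reinterpreted as a static cell-probe data structure computing $\ip{M_i, X}$: Merlin's message plays the role of the preprocessed data, and the Alice--Bob protocol plays the role of the query algorithm. This simulation is sound precisely because Alice's input is the operator $M$, which the static query algorithm also has access to for free. Composing the two reductions gives a static data structure with space $s \leq \Otilde(n \cdot t_u) \leq m^{0.99}$ and query time $O(t_q)$ for computing $\ip{M_i, X}$ against some linear operator $M$.

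Finally, I apply \pref{thm:informal_main} to this static data structure for a random linear operator $M \in \Field_2^{m \times n}$, yielding
\[
t_q \;\geq\; \Omega\!\left( \min\!\left\{ \log (m/s),\; n/\log s \right\} \right).
\]
Picking $m = 2^{\Theta(n^{1/2})}$ (super-polynomial in $n$) forces $s \leq m^{0.99} = 2^{\Theta(n^{1/2})}$, so $\log(m/s) \geq 0.01 \log m = \Omega(n^{1/2})$ and $n/\log s = \Omega(n^{1/2})$. Both terms of the minimum are therefore $\Omega(n^{1/2})$, which delivers the claimed bound.

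I expect the main obstacle to be the bookkeeping in the first reduction: one must verify that Merlin can faithfully summarize the update phase in length $\Otilde(n \cdot t_u)$ even though individual cells may be written during preprocessing, read during an update, and then overwritten with an $X$-dependent value, and that the resulting post-update cell values suffice for Bob to answer every adaptive probe. Everything else is essentially a parameter calculation feeding into \pref{thm:informal_main}, together with the care that the $\Otilde(\cdot)$ factor from address encoding can be absorbed into the $m^{0.99}$ slack of the hypothesis.
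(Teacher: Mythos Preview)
Your first step (\Patrascu's reduction) and your final parameter calculation are both fine. The gap is step~2: the Ko--Weinstein observation points the other way. It says that a static cell-probe data structure yields a Multiphase Communication protocol (Merlin sends the preprocessed table, Alice simulates the querier by asking Bob for each probed bit), i.e., a communication lower bound implies a static lower bound. It does \emph{not} let you turn an arbitrary Multiphase Communication protocol back into a static data structure with comparable space and probe count. Concretely, in the protocol coming out of \Patrascu's reduction, for every probe to an address $a$ Bob must first decide whether $a$ was overwritten during the update phase. To make that a genuine bit-probe data structure you would have to support membership queries into a set of at most $n\cdot t_u$ addresses drawn from a universe of size roughly $mn\cdot t_u$; no $\Otilde(n\cdot t_u)$-bit structure answers such queries in $O(1)$ bit probes, so each of the $t_q$ dynamic probes would cost you $\Theta(\log m)$ static probes. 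With your choice $m=2^{\Theta(n^{1/2})}$ that blowup is $\Theta(n^{1/2})$ and wipes out the bound. (Relatedly, the Alice--Bob communication in step~1 is not $O(t_q)$ bits but $O(t_q\cdot\log(\text{memory}))$, since Alice must transmit addresses.)

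The paper avoids this entirely. As \pref{rem:multiphase} notes, the proof of \pref{thm:informal_main} already works against an arbitrary Multiphase Communication protocol: after Merlin's $s$-bit message, any $t$-round Alice--Bob exchange can be compressed to a single Alice message plus a consistency check by Bob, with advantage $2^{-(\text{Bob's bits})}$. One therefore applies that communication lower bound directly to the protocol from \Patrascu's reduction, with $s=\Otilde(n\cdot t_u)\le m^{0.99}$ and Bob sending $O(t_q)$ bits, and your choice $m=2^{\Theta(n^{1/2})}$ then gives $t_q\ge\Omega(n^{1/2})$. In short: drop step~2 and invoke \pref{rem:multiphase} in place of the bare static theorem; the rest of your argument already matches the paper.
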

\noindent We suspect further reductions to various explicit dynamic data structure problems from \pref{cor:multiphase}. Note that a trivial data structure exists for the above explicit dynamic data structure instance, achieving $t_q = O(n)$ by simply reading $M_i$ and $X$ to compute $\ip{M_i, X}$. This is important because one could have otherwise devised a dynamic data structure problem which selects arbitrary functions $f_1, \ldots, f_m : \{ 0 , 1 \}^n \to \{0,1\}$, and outputs $f_i ( X )$ at the query phase. 

Using Miltersen’s counting argument \cite{miltersen_bit_1993}, one can readily demonstrate the existence of functions $f_1, \ldots, f_m$ such that, if the total update time $n \cdot t_u = O(m^{0.99})$, then $t_q \geq \Omega(n^{0.99})$. However, mainly due to requiring an arbitrary function $f_i$, which requires $2^n$-bits to describe per query, this prohibits any trivial query algorithm of $t_q = O(n)$, as one needs to read about $f_i$, nor reduces to any interesting problem in dynamic data structure. \footnote{Another way to view this phenomenon is that our instance requires small linear bits ($n$-bits) per query, while Miltersen's counting argument requires exponential bits per query.}

\subsubsection{Technical Contribution}  

The main technical ingredient of our work is the extension of the analysis of Multiphase Game~\cite{ko_adaptive_2020, dvorak_lower_2020} to exponentially small advantage regimes using average min-entropy introduced in \cite{dodis_fuzzy_2008}. We suspect that our technique to be a critical component in fully resolving the Multiphase Program~\cite{patrascu_mihai_towards_2010, thorup_mihai_2013, braverman_communication_2022}. Analyzing the small advantage regime is crucial for data structure lower bounds, as demonstrated in recent breakthroughs in dynamic data structure lower bounds \cite{larsen_crossing_2020, larsen_super-logarithmic_2025}.

The main technical contribution in \cite{ko_adaptive_2020} shows that after Merlin's message, if Alice and then Bob each speak once and correctly output $\DISJ(M_i, X)$ with $s = o(m)$, then Alice and Bob must communicate $\Omega( \sqrt{n} )$ bits. \cite{dvorak_lower_2020} extended the result to $\ip { M_i , X }$ with $\Omega (n)$ bits of communication.

The key open problem posed in \cite{ko_adaptive_2020} is to extend the previous result to the setting where Alice, Bob then Alice speaks (i.e. one extra round compared to \cite{ko_adaptive_2020, dvorak_lower_2020}). This has connections to the fundamental question of linear vs. non-linear circuits as remarked in \cite{ko_adaptive_2020}. We reduce the general setting with $t$-rounds of communication to a setting where Alice and then Bob speak with reduced correctness, via following observation.

Alice does the following: she picks a single random $t$-probe decision tree path. Then Alice sends the path to Bob. Bob checks if the path is consistent with $U$, Merlin's message. If yes, Bob outputs the same output as the end of the path. Otherwise Bob simply makes a random guess about the output. The main observation is that Bob would say yes with $2^{-t}$ probability. If Bob says Yes, then the protocol is correct. Otherwise, we get 0 advantage. Therefore, this would result in $\Omega( 2^{-t} )$ advantage over random guessing.

We then focus on the setting in which Alice and then Bob each speak once, with Bob guessing the value of $\ip { M_i , X }$ with a small advantage over a random guessing. That is, we consider {\bf the small advantage regime}. Unfortunately, previous arguments \cite{ko_adaptive_2020, dvorak_lower_2020} incurred a constant blowup in error, making them insufficient for any meaningful bound for small advantage regimes. In fact, as observed in two-party Disjointness \cite{braverman_information_2013}, analyzing small advantage regime often requires different tools from that of small error regime. Our technical novelty here is to leverage the average min-entropy framework introduced in \cite{dodis_fuzzy_2008} instead of standard information theoretic arguments using KL-divergence, and carefully choose random variables that fit the framework of \cite{ko_adaptive_2020}. Average min-entropy allows us to naturally bound the $\ell_2$-norm of the probability distribution, which then can be then used to argue about the underlying discrepancy. 
\section{Preliminary} 

\subsection{Information Theory}

In this section, we provide the necessary background on information theory and information complexity that are used 
in this paper. For further reference, we refer the reader to \cite{cover_elements_2006}.

\begin{definition}[Entropy]
	The entropy of a random variable $X$ is defined as 
	\begin{equation*}
	H(X) := \sum_{x} \Pr[X=x] \log \frac{1}{\Pr[X=x]}.
	\end{equation*}
	Similarly, the conditional entropy is defined as
	\begin{equation*}
	H(X|Y) := \E_{Y} \left[ \sum_{x} \Pr[X = x | Y = y] \log \frac{1}{\Pr[ X = x | Y = y]} \right].
	\end{equation*}
\end{definition}

\begin{fact}[Conditioning Decreases Entropy] \label{fact:conditioningentropy}
For any random variable $X$ and $Y$
\begin{equation*}
    H(X) \geq H(X|Y)
\end{equation*}
\end{fact}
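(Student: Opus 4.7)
The plan is to reduce this to the non-negativity of mutual information, which in turn follows from Jensen's inequality applied to the concave function $\log$. Concretely, I would first rewrite both $H(X)$ and $H(X \mid Y)$ as sums over the joint distribution of $(X,Y)$ so the difference telescopes into a single expression involving ratios of joint and marginal probabilities.

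Expanding the definitions, I would write
\begin{equation*}
H(X) = \sum_{x,y} \Pr[X=x, Y=y] \log \frac{1}{\Pr[X=x]}
\end{equation*}
and
\begin{equation*}
H(X \mid Y) = \sum_{x,y} \Pr[X=x, Y=y] \log \frac{\Pr[Y=y]}{\Pr[X=x, Y=y]},
\end{equation*}
using the chain rule $\Pr[X=x \mid Y=y] = \Pr[X=x, Y=y] / \Pr[Y=y]$. Subtracting these two expressions gives
\begin{equation*}
H(X) - H(X \mid Y) = \sum_{x,y} \Pr[X=x, Y=y] \log \frac{\Pr[X=x, Y=y]}{\Pr[X=x] \Pr[Y=y]},
\end{equation*}
which is exactly the mutual information $I(X;Y)$, expressed as the KL divergence between the joint distribution and the product of the marginals.

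To finish, I would invoke Jensen's inequality applied to the concave function $\log$: since the weights $\Pr[X=x,Y=y]$ are nonnegative and sum to $1$, and since $\sum_{x,y} \Pr[X=x]\Pr[Y=y] = 1$, one obtains
\begin{equation*}
-I(X;Y) = \sum_{x,y} \Pr[X=x, Y=y] \log \frac{\Pr[X=x] \Pr[Y=y]}{\Pr[X=x, Y=y]} \leq \log \sum_{x,y} \Pr[X=x] \Pr[Y=y] = 0,
\end{equation*}
and therefore $H(X) - H(X \mid Y) = I(X;Y) \geq 0$, which is the claim.

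The only minor subtlety, and hence not really a serious obstacle for such a standard fact, is handling pairs $(x,y)$ with $\Pr[X=x, Y=y] = 0$ (equivalently, points outside the support of the joint); these are treated with the convention $0 \log 0 = 0$ and simply omitted from the sum, so that Jensen's inequality is applied only over the support of the joint distribution, where all ratios inside the logarithm are well-defined and positive.
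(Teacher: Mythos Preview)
Your proof is correct. The paper does not prove this fact at all; it is stated as a standard \emph{Fact} from information theory with no proof given (the reader is referred to \cite{cover_elements_2006}). Your argument via $H(X)-H(X\mid Y)=I(X;Y)=\KL{P_{XY}}{P_X\times P_Y}\ge 0$ together with Jensen's inequality is the standard textbook derivation. One tiny wording fix: when you restrict to the support of the joint distribution, the sum $\sum_{(x,y)\in\mathrm{supp}}\Pr[X=x]\Pr[Y=y]$ is only $\le 1$ rather than exactly $1$, but since $\log$ is increasing this still yields $\le \log 1 = 0$, so the conclusion is unaffected.
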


\noindent With entropy defined, we can also quantify the correlation between two random variables, or how much information one random variable conveys about the other.

\begin{definition}[Mutual Information]
	Mutual information between $X$ and $Y$ (conditioned on $Z$) is defined as 
	\begin{equation*}
	I( X; Y | Z) := H(X|Z) - H(X|YZ).
	\end{equation*}
\end{definition}

\noindent Similarly, we can also define how much one distribution conveys information about the other distribution.

\begin{definition}[KL-Divergence]
	KL-Divergence between two distributions $\mu$ and $\nu$ is defined as
	\begin{equation*}
	\KL{\mu}{\nu} := \sum_{x} \mu(x) \log \frac{\mu(x)}{\nu(x)}.
	\end{equation*}
\end{definition}

\noindent To bound mutual information, it suffices to bound KL-divergence, due to the following fact.

\begin{fact}[KL-Divergence and Mutual Information] \label{fact:KL_Mutual}
	The following equality between mutual information and KL-Divergence holds 
	\begin{equation*}
	I(A;B|C) = \E_{B,C} \left[ \KL{ A|_{B=b, C=c} }{ A|_{C=c} } \right].
	\end{equation*}
\end{fact}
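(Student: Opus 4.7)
The plan is to prove this identity directly from the definitions of conditional entropy, mutual information, and KL-divergence, by algebraically rearranging the logarithms so that the terms conditioned on $C=c$ and on $(B=b, C=c)$ combine into a log-ratio. The crux is a standard marginalization trick that lets us rewrite $H(A|C)$ as an expectation over $B$ as well.

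First, I would unfold $I(A;B|C) = H(A|C) - H(A|B,C)$ using the definition of conditional entropy. The second term is already naturally expressed as an expectation over $(B,C)$ with an inner sum over $a$ of $\Pr[A=a \mid B=b, C=c] \log \frac{1}{\Pr[A=a \mid B=b, C=c]}$. For the first term, I would use the identity
\[
\Pr[A=a \mid C=c] \;=\; \sum_{b} \Pr[A=a, B=b \mid C=c] \;=\; \sum_{b} \Pr[B=b \mid C=c] \cdot \Pr[A=a \mid B=b, C=c],
\]
so that the weight $\Pr[A=a \mid C=c]$ appearing in $H(A|C)$ can be replaced by the double summation, effectively turning the $\E_C$ outside into an $\E_{B,C}$ while leaving the logarithm depending only on $C=c$.

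Once both halves of $H(A|C) - H(A|B,C)$ are written as $\E_{B,C}\sum_a \Pr[A=a \mid B=b, C=c]\cdot (\cdot)$, subtracting them combines the two logarithms into $\log \frac{\Pr[A=a \mid B=b, C=c]}{\Pr[A=a \mid C=c]}$, which is exactly the integrand in $\KL{A|_{B=b,C=c}}{A|_{C=c}}$. Taking the outer expectation over $(B,C)$ yields the claimed identity.

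There is no real technical obstacle here; the only step that requires care is making sure the marginalization in the first term is done consistently so that both terms share the same outer measure $\E_{B,C}$ and the same inner conditional distribution $\Pr[A=a \mid B=b, C=c]$ weighting the logarithms. Once that alignment is in place, the identity drops out in one line of cancellation.
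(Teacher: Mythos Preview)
Your proposal is correct and is the standard derivation of this identity. The paper itself does not prove this statement; it is recorded there as a background fact from information theory (with a general reference to Cover and Thomas), so there is no proof in the paper to compare against.
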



\begin{fact}[Pinsker's Inequality] \label{fact:pinsker}
For any two distributions $P$ and $Q$, 
\begin{equation*}
\norm{ P - Q }_{TV} = \frac{1}{2} \norm{ P - Q }_1 \leq \sqrt{ \frac{1}{2 \log e} D( P || Q) }	
\end{equation*}
\end{fact}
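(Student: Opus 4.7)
The plan is to prove Pinsker's inequality by the classical two-step route: reduce to the binary alphabet case by a coarsening / data processing argument, then verify the binary inequality by a one-variable calculus check.

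First I would form a binary coarsening. Let $A = \{ x : P(x) \geq Q(x) \}$, so that
\begin{equation*}
    \norm{P - Q}_{TV} \;=\; P(A) - Q(A).
\end{equation*}
Define the two-point distributions $\tilde P, \tilde Q$ on $\{0,1\}$ by $\tilde P(1) = P(A)$ and $\tilde Q(1) = Q(A)$. By construction $\norm{\tilde P - \tilde Q}_{TV} = \norm{P-Q}_{TV}$ exactly. Furthermore, the data processing inequality for KL-divergence — which is itself an immediate consequence of the log-sum inequality, and hence of the convexity of $t \log t$ — gives $\KL{P}{Q} \geq \KL{\tilde P}{\tilde Q}$. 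So it suffices to prove Pinsker's inequality when $P$ and $Q$ are both supported on $\{0,1\}$.

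Second, set $p = P(A)$ and $q = Q(A)$ and assume without loss of generality that $p \geq q$. The target becomes
\begin{equation*}
    p \log \frac{p}{q} + (1-p) \log \frac{1-p}{1-q} \;\geq\; 2 (\log e) (p-q)^2.
\end{equation*}
Fixing $p$ and letting $\phi(q)$ denote the difference between the two sides, we have $\phi(p) = 0$. A direct differentiation (collecting the derivatives of the two log terms and the quadratic) yields
\begin{equation*}
    \phi'(q) \;=\; (\log e)\,(p - q)\left[\, 4 - \frac{1}{q(1-q)} \,\right].
\end{equation*}
On $(0,1)$ we have $q(1-q) \leq 1/4$, so the bracketed factor is non-positive; combined with $p - q \geq 0$ this yields $\phi'(q) \leq 0$ on $(0, p]$. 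Since $\phi(p)=0$ and $\phi$ is non-increasing on this interval, we conclude $\phi(q) \geq 0$ throughout $(0, p]$, and taking square roots recovers the stated form of Pinsker's inequality.

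There is no deep conceptual obstacle here, since this is a classical inequality and the above is the textbook argument. The only items warranting a careful check are in the first step: one must verify that the binary coarsening via the partition $\{A, A^c\}$ preserves total variation exactly (true essentially by the definition of $A$) and that it can only decrease KL-divergence (a short appeal to convexity of $t \log t$). Once these are in place, what remains is the routine single-variable calculus check in the second step.
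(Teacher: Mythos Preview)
Your proof is correct and is the standard textbook argument. The paper, however, does not prove this statement at all: it is listed as a \emph{Fact} in the preliminaries (with a general reference to Cover and Thomas), so there is no proof in the paper to compare against. Your two-step reduction (coarsening to a binary alphabet via data processing, then the one-variable calculus check on $\phi(q)$) is exactly the classical route and the computations are right, including the derivative $\phi'(q) = (\log e)(p-q)\bigl[4 - 1/(q(1-q))\bigr]$.
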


\noindent We also make use of the following facts on Mutual Information throughout the paper.

\begin{fact}[Chain Rule] \label{fact:chainrule}
For any random variable $A,B,C$ and $D$  
\begin{equation*}
    I(AD;B|C) = I(D;B|C) + I(A;B|CD).
\end{equation*}
\end{fact}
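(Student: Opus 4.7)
The plan is to unfold the mutual information into entropies using the definition $I(X;Y|Z) = H(X|Z) - H(X|YZ)$ that was given earlier in the preliminaries, then apply the chain rule for entropy to the joint variable $AD$, and finally regroup terms so that two conditional mutual informations reappear on the right-hand side.

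Concretely, I would first write
\begin{equation*}
I(AD;B|C) = H(AD|C) - H(AD|BC).
\end{equation*}
Next I would invoke the chain rule for (conditional) entropy, $H(XY|Z) = H(X|Z) + H(Y|XZ)$, which itself is a routine consequence of the definition of conditional entropy together with $\Pr[XY=xy|Z=z] = \Pr[X=x|Z=z]\cdot\Pr[Y=y|X=x,Z=z]$. Applying this both to $H(AD|C)$ and to $H(AD|BC)$ gives
\begin{equation*}
H(AD|C) = H(D|C) + H(A|CD), \qquad H(AD|BC) = H(D|BC) + H(A|BCD).
\end{equation*}

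Substituting these into the first equation and regrouping yields
\begin{equation*}
I(AD;B|C) = \bigl(H(D|C) - H(D|BC)\bigr) + \bigl(H(A|CD) - H(A|BCD)\bigr),
\end{equation*}
and each parenthesized term is, by definition, a conditional mutual information, namely $I(D;B|C)$ and $I(A;B|CD)$ respectively. This completes the identity.

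There is essentially no obstacle here: the only small point to be careful about is choosing the correct order in the entropy chain rule so that the extra conditioning on $D$ appears in the second summand (producing $I(A;B|CD)$ rather than $I(A;B|C)$); expanding $H(AD\mid\cdot)$ as $H(A\mid\cdot) + H(D\mid A,\cdot)$ instead of $H(D\mid\cdot) + H(A\mid D,\cdot)$ would give the symmetric identity $I(AD;B|C) = I(A;B|C) + I(D;B|AC)$, which is equally valid but not the form stated. No probabilistic inequalities or nontrivial estimates are needed, so the argument is purely algebraic once the definitions are in place.
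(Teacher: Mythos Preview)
Your proof is correct and is the standard derivation. The paper itself does not prove this statement at all; it is listed as a ``Fact'' in the preliminaries with the reader referred to \cite{cover_elements_2006} for background, so there is nothing to compare against.
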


\begin{fact} \label{fact:chainrule1}
For any random variable $A,B,C$ and $D$, if $I(B;D|C) = 0$
\begin{equation*}
    I(A;B|C) \leq I(A;B|CD).
\end{equation*}
\end{fact}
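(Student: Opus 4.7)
The plan is to apply the chain rule (\pref{fact:chainrule}) twice to the joint mutual information $I(AD;B|C)$, expanding it in two different orders, and then to combine the two resulting identities using the hypothesis $I(B;D|C)=0$ together with the non-negativity of conditional mutual information.

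First I would peel off $D$ from the joint variable $AD$: the chain rule gives
\[
I(AD;B|C) = I(D;B|C) + I(A;B|CD).
\]
Since mutual information is symmetric in its first two arguments, $I(D;B|C) = I(B;D|C) = 0$ under the hypothesis, so this expression simplifies to $I(AD;B|C) = I(A;B|CD)$. Next I would expand the same quantity by peeling off $A$ first, obtaining
\[
I(AD;B|C) = I(A;B|C) + I(D;B|CA).
\]
Equating the two expansions yields $I(A;B|CD) = I(A;B|C) + I(D;B|CA)$, and since conditional mutual information is non-negative (which follows from \pref{fact:conditioningentropy} applied in the form $H(B|CA) \geq H(B|CAD)$), the extra term $I(D;B|CA)$ can only be dropped upward, giving the claimed inequality $I(A;B|C) \leq I(A;B|CD)$.

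There is no genuine obstacle here; the argument is a pure bookkeeping exercise with \pref{fact:chainrule}. The only mild subtlety is the symmetry step $I(B;D|C) = I(D;B|C)$, which follows immediately from the equivalent entropy-based form $I(X;Y|Z) = H(X|Z) - H(X|YZ) = H(Y|Z) - H(Y|XZ)$, so that the hypothesis stated as $I(B;D|C)=0$ can be substituted directly into the chain-rule expansion of $I(AD;B|C)$ without any further manipulation.
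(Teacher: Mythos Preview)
Your proof is correct and is essentially the same as the paper's: both apply the chain rule to $I(AD;B|C)$ in two ways, use the hypothesis $I(B;D|C)=0$ on one expansion and non-negativity of $I(D;B|CA)$ on the other. The paper simply compresses this into the single chain $I(A;B|C) \leq I(AD;B|C) = I(B;D|C) + I(A;B|CD) = I(A;B|CD)$.
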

\begin{proof} By the chain rule and non-negativity of mutual information, 
\begin{align*}
I(A;B|C) \leq I(AD;B|C) = I(B;D|C) + I(A;B|CD) = I(A;B|CD).
\end{align*}
\end{proof}

\begin{fact}\label{fact:chainrule2}
For any random variable $A,B,C$ and $D$, if $I(B;D|AC) = 0$
\begin{equation*}
    I(A;B|C) \geq I(A;B|CD).
\end{equation*}
\end{fact}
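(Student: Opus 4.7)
The plan is to mirror the proof of the preceding Fact (\pref{fact:chainrule1}), but apply the chain rule in the opposite order. Concretely, I would expand $I(AD;B|C)$ via \pref{fact:chainrule} in two distinct ways: first peeling off $D$, then peeling off $A$. This gives the two identities
\begin{align*}
I(AD;B|C) &= I(A;B|C) + I(D;B|CA), \\
I(AD;B|C) &= I(D;B|C) + I(A;B|CD).
\end{align*}
Equating the two right-hand sides yields
\begin{equation*}
I(A;B|C) + I(D;B|CA) = I(D;B|C) + I(A;B|CD).
\end{equation*}

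Next I would use the hypothesis $I(B;D|AC) = 0$, which by symmetry of mutual information is the same as $I(D;B|CA) = 0$. Substituting this into the displayed equality reduces it to
\begin{equation*}
I(A;B|C) = I(D;B|C) + I(A;B|CD).
\end{equation*}
Finally, since mutual information is non-negative, $I(D;B|C) \geq 0$, and the desired inequality $I(A;B|C) \geq I(A;B|CD)$ follows immediately.

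There is no real obstacle here: the statement is a standard one-line consequence of the chain rule together with non-negativity, and the only subtlety is choosing the correct expansion order so that the vanishing term $I(D;B|CA)$ appears on the side opposite $I(A;B|CD)$ (in contrast to the proof of \pref{fact:chainrule1}, where the vanishing term sits on the same side and yields the reverse inequality). The whole argument should fit in three lines.
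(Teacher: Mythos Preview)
Your proposal is correct and is essentially the same argument as the paper's: both expand $I(AD;B|C)$ via the chain rule in the two possible orders, use the hypothesis to kill the $I(D;B|CA)$ term, and invoke non-negativity of $I(D;B|C)$. The paper just compresses this into the single chain $I(A;B|CD) \leq I(AD;B|C) = I(A;B|C) + I(B;D|AC) = I(A;B|C)$.
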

\begin{proof} By the chain rule and non-negativity of mutual information,  
\begin{align*}
I(A;B|CD) \leq I(AD;B|C) = I(A;B|C) + I(B;D|AC) = I(A;B|C).
\end{align*}
\end{proof}

\subsection{$\ell_2$-norm of a distribution}

Another good measure of the randomness of a distribution is its $\ell_2$-norm. The more ``spread out" a distribution is, the smaller its $\ell_2$-norm. We introduce the following definitions to argue about the $\ell_2$-norm of a distribution.

\begin{definition} We define the renyi entropy $H_2 (A)$ and min-entropy $H_\infty (A)$ as
    \begin{align*}
        H_2 (A) := - \log \left( \sum_{a} \Pr[ A = a]^2 \right) \\
        H_\infty (A) := - \log \left( \max_{a} \Pr[ A = a]  \right)
    \end{align*}
\end{definition}

\begin{fact}[Renyi Entropy] \label{fact:renyi}
Let $A$ be a random variable. Then
	\begin{equation*}
		H ( A ) \geq H_2 (A) \geq H_\infty (A) 
	\end{equation*}
 In particular, for any fixed $b$ we have
\begin{equation*}
H_2 ( A| B = b ) \geq H_\infty ( A | B = b)	
\end{equation*}
\end{fact}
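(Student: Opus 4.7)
The plan is to treat this as three separate tasks: the two unconditional inequalities and then the conditional corollary, which will follow immediately from applying the second inequality to the conditional distribution.

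First I would prove $H_2(A) \geq H_\infty(A)$ by a one-line bounding argument. Let $p^* = \max_a \Pr[A=a]$, which is attained by some $a^*$. Then
\[
\sum_a \Pr[A=a]^2 \;\leq\; p^* \cdot \sum_a \Pr[A=a] \;=\; p^*,
\]
and taking $-\log$ of both sides (which is monotone decreasing) flips the inequality to give $H_2(A) \geq H_\infty(A)$.

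Next I would prove $H(A) \geq H_2(A)$ by Jensen's inequality. Writing $p_a := \Pr[A=a]$, observe that
\[
H(A) \;=\; \sum_a p_a \log \frac{1}{p_a} \;=\; \mathbb{E}_{a \sim A}\!\left[-\log p_a\right],
\qquad
H_2(A) \;=\; -\log \sum_a p_a^2 \;=\; -\log \mathbb{E}_{a \sim A}[p_a].
\]
Since $-\log$ is convex, Jensen's inequality gives $\mathbb{E}[-\log p_a] \geq -\log \mathbb{E}[p_a]$, which is exactly $H(A) \geq H_2(A)$.

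Finally, the ``in particular'' statement $H_2(A \mid B=b) \geq H_\infty(A \mid B=b)$ follows by applying the first inequality directly to the distribution of $A$ conditioned on the event $B=b$, since that is simply another random variable (no averaging over $b$ is involved). There is no real obstacle here; the only subtlety worth flagging is that conditioning on a specific value is pointwise in $b$, so one does not need any convexity or concavity in $b$ to push the inequality through.
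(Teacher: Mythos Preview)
Your proof is correct and is the standard argument. The paper itself states this as a \emph{Fact} without proof (it is a well-known monotonicity property of R\'enyi entropies), so there is no paper proof to compare against; your treatment via the trivial bound $\sum_a p_a^2 \le p^*$ and Jensen's inequality for $-\log$ is exactly what one would expect.
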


We use the following simple claim to argue about $\ell_\infty$-norm of a distribution when its KL-divergence with the uniform distribution is small.

\begin{claim} \label{cl:lowL2}
    Let $\cD$ be a distribution and $\cU$ a uniform distribution over some $\cS \subset \{0,1\}^n$. Then if $D( \cD || \cU ) < t$ with $t > 1$, then for every $\alpha > 2$ there exists an event $E$ such that 
    \begin{align*}
        & \cD ( E ) \geq 1 - \frac{1}{\alpha} \\ 
        & H_\infty ( \cD |_{E} ) \geq  \log |\cS| - 2 \alpha t  
    \end{align*}
\end{claim}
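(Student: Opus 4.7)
The plan is a Markov-type truncation on $\cD$ driven by the hypothesis $D(\cD || \cU) < t$. Since $\cU$ is uniform on $\cS$, the divergence expands as
$$D(\cD || \cU) = \sum_{x \in \cS} \cD(x) \log(\cD(x) \cdot |\cS|),$$
so the KL is inflated precisely at elements where $\cD(x) \gg 1/|\cS|$. I would take the heavy set $B := \{ x \in \cS : \cD(x) > 2^{2\alpha t}/|\cS| \}$ and let $E := \cS \setminus B$; by the very definition of $B$ the min-entropy bound will follow almost tautologically, so the whole proof reduces to showing $\cD(B) < 1/\alpha$.

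To bound $\cD(B)$, I would split the KL-sum into contributions from $B$ and from $\cS \setminus B$. On $B$ we have $\log(\cD(x)|\cS|) > 2\alpha t$, so that piece contributes strictly more than $2\alpha t \cdot \cD(B)$. The subtle point is that on $\cS \setminus B$ the summand can be negative (whenever $\cD(x) < 1/|\cS|$), so one cannot simply drop it. I would use the elementary pointwise inequality $p \log(p \cdot |\cS|) \geq -1/(e |\cS| \ln 2)$, which after summing over at most $|\cS|$ points costs only the universal constant $1/(e \ln 2) < 1$. Combining, $2\alpha t \cdot \cD(B) - 1 < t$, and since $t > 1$ this yields $\cD(B) < (t+1)/(2\alpha t) < 1/\alpha$ with room to spare.

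From $\cD(B) < 1/\alpha$ we immediately get $\cD(E) > 1 - 1/\alpha$, proving the first bullet. For the min-entropy claim, every $x \in E$ satisfies $\cD(x) \leq 2^{2\alpha t}/|\cS|$ by definition of $B$, so $\cD|_E(x) = \cD(x)/\cD(E) \leq 2^{2\alpha t}/(|\cS| \cdot \cD(E))$; combined with $\cD(E) > 1/2$ (using $\alpha > 2$) this gives $H_\infty(\cD|_E) \geq \log|\cS| - 2\alpha t - O(1)$, and a minor tightening of the threshold in the definition of $B$ (e.g.\ replacing $2^{2\alpha t}/|\cS|$ by $2^{2\alpha t - 1}/|\cS|$ and re-checking the Markov step) absorbs the additive constant to match the stated bound. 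The only real delicacy in the whole argument is controlling the negative KL contributions from the light elements in $\cS \setminus B$, which is precisely what the uniform lower bound on $p \log(p|\cS|)$ resolves; everything else is a direct Markov estimate plus the tautological conversion between $\cD(x)$ and $\cD|_E(x)$.
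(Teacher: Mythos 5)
Your truncation $E=\cS\setminus B$ with $B$ the heavy set is essentially the same device as the paper's (which takes $E=\{X:\log\tfrac{\cD(X)}{\cU(X)}<\alpha t\}$), and you are in fact \emph{more} careful than the paper at the one delicate spot: the paper invokes ``by Markov's inequality $1-\cD(E)<1/\alpha$'' without addressing that $\log\tfrac{\cD(X)}{\cU(X)}$ is a signed quantity, which is exactly what your pointwise bound $p\log(p|\cS|)\geq -1/(e|\cS|\ln 2)$ and the resulting $-1/(e\ln 2)$ penalty resolve. That step of yours is correct and fills a real gap in the paper's prose.

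The issue is in the final calibration. Because you needed the threshold $2^{2\alpha t}/|\cS|$ (rather than the paper's $2^{\alpha t}/|\cS|$) to make the corrected Markov step go through for all $t>1$, there is no slack left at the min-entropy step: you get $H_\infty(\cD|_E)>\log|\cS|-2\alpha t-1$, not $\log|\cS|-2\alpha t$. Your proposed ``minor tightening'' (threshold $2^{2\alpha t-1}/|\cS|$) does not close this. Re-running your Markov step with the lowered threshold gives $\cD(B)<\tfrac{t+1/(e\ln 2)}{2\alpha t-1}$, and requiring this to be below $1/\alpha$ unwinds to $t>\tfrac{1}{e\ln 2}+\tfrac{1}{\alpha}$, which for $\alpha$ slightly above $2$ demands $t\gtrsim 1.03$ — strictly stronger than the hypothesis $t>1$. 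By contrast, the paper's smaller threshold $\alpha t$ leaves exactly the slack needed: it yields $H_\infty(\cD|_E)\geq\log|\cS|-\alpha t-1\geq\log|\cS|-2\alpha t$ using $\alpha t\geq 1$. So the two proofs have complementary weaknesses — the paper's Markov step is unjustified as written, while your careful Markov forces a threshold that costs an additive bit you cannot fully recover. For the application (Proposition~\ref{prop:El2} is invoked with $\alpha=4$ and generous slack) the $O(1)$ is harmless, but as a proof of the claim exactly as stated, the tightening needs to be reworked — e.g.\ keep your $-1/(e\ln 2)$ accounting but push it onto the paper's threshold $\alpha t$ and argue about the remaining constant there, or state the claim with $2\alpha t + 1$ in the conclusion.
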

\begin{proof}
    We partition $X$ depending on $\cD (X)$. Let $E$ denote the set of $X$ such that
    \begin{equation*}
        \log \frac{\cD(X)}{\cU (X)} < \alpha t 
    \end{equation*}
    Since $D( \cD || \cU ) < t$, by Markov's inequality $1 - \cD (E) < 1 / \alpha$. With $\alpha > 2$, for any $X$ that is in the support of $\cD|_{E}$, we have 
    \begin{equation*}
        \cD|_{E} (X) \leq 2 \cdot \cD( X) \leq 2^{\alpha t + 1} \cdot \cU (X ) = 2^{- \log |\cS| + \alpha t + 1} \leq 2^{ -  \log |\cS| + 2 \alpha t}
    \end{equation*}
    which then gives $H_\infty ( \cD |_{E} ) \geq   \log |\cS|  - 2 \alpha t $.
\end{proof}

We prove the following lemma to bound the $\ell_2$-norm of a distribution.
\begin{lemma} \label{lem:collision}
    Let $A, B$ be random variables where $B$ has at most $2^\lambda$ possible values. Then 
    \begin{equation*}
        \E_{b \sim B} \left[ \sum_{a} \Pr [ A = a | B = b]^2 \right] \leq 2^{- H_\infty (A) + \lambda}
    \end{equation*}
\end{lemma}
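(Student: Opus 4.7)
The plan is to unfold the expectation, rewrite the resulting sum in terms of the joint distribution of $(A,B)$, and then use the min-entropy bound on $A$ together with the fact that the conditional mass functions $\Pr[A=\cdot\,|\,B=b]$ each have total mass $1$.

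Concretely, I would first expand
\begin{equation*}
\E_{b \sim B}\!\left[\sum_{a}\Pr[A=a\,|\,B=b]^{2}\right]
= \sum_{a,b} \Pr[B=b]\cdot \Pr[A=a\,|\,B=b]^{2}
= \sum_{a,b} \Pr[A=a,\,B=b]\cdot \Pr[A=a\,|\,B=b],
\end{equation*}
using $\Pr[B=b]\cdot\Pr[A=a\,|\,B=b]=\Pr[A=a,B=b]$ on one of the two factors. This rewriting is the one step that matters: it converts an awkward conditional $\ell_2$-norm into something directly comparable to a marginal probability of $A$.

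Next, I would use the trivial bound $\Pr[A=a,B=b]\le \Pr[A=a]\le 2^{-H_\infty(A)}$, pull the resulting constant out of the sum, and exploit that $\sum_{a}\Pr[A=a\,|\,B=b]=1$ for each fixed $b$ to get
\begin{equation*}
\sum_{a,b} \Pr[A=a,B=b]\cdot \Pr[A=a\,|\,B=b]
\;\le\; 2^{-H_\infty(A)} \sum_{b} \sum_{a} \Pr[A=a\,|\,B=b]
\;=\; 2^{-H_\infty(A)} \cdot |\mathrm{supp}(B)|.
\end{equation*}
Since by hypothesis $|\mathrm{supp}(B)|\le 2^{\lambda}$, this yields the claimed bound $2^{-H_\infty(A)+\lambda}$.

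There is no real obstacle here; the only conceptual point is choosing which of the two factors $\Pr[A=a\,|\,B=b]$ to pair with $\Pr[B=b]$ to form the joint. An alternative route would go through average min-entropy: bound the conditional collision probability by $\max_a \Pr[A=a\,|\,B=b]$ using $\sum_a p_a^2 \le \max_a p_a$, take the expectation to get $2^{-\tilde H_\infty(A|B)}$, and then apply the Dodis--Ostrovsky--Reyzin inequality $\tilde H_\infty(A|B)\ge H_\infty(A)-\lambda$. This is essentially the same proof packaged in the language the section advertises, so I would present the direct calculation and simply remark on the min-entropy interpretation.
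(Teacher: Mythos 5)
Your main argument is correct and is a genuinely different, more elementary route than the paper's. The paper proves the lemma by quoting Lemma~\ref{lem:dors} (the Dodis--Ostrovsky--Reyzin bound $\tilde H_\infty(A|B)\ge H_\infty(A)-\lambda$) and then converting conditional collision probabilities to conditional min-entropies via Fact~\ref{fact:renyi}, so it never touches the sum directly; you instead do the double-counting by hand, rewriting $\Pr[B=b]\Pr[A=a\,|\,B=b]^2$ as $\Pr[A=a,B=b]\Pr[A=a\,|\,B=b]$, bounding the joint by the marginal by $2^{-H_\infty(A)}$, and letting $\sum_a\Pr[A=a\,|\,B=b]=1$ absorb the other factor before summing over the at most $2^\lambda$ values of $b$. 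Your version is self-contained and arguably clearer about where each factor of the bound comes from; the paper's version is shorter once Lemma~\ref{lem:dors} is taken as given and keeps the average min-entropy machinery front and center, which is the point of the section. The ``alternative route'' you sketch in your last sentence is precisely the paper's proof, so you have in effect reproduced both.
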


We use the following lemma on ``average" min-entropy.

\begin{definition}[Average Min-Entropy]
	\begin{equation*}
		\tilde{H}_\infty ( A | B ) = - \log \left( \E_{b \sim B} \left[ \max_{a} \Pr [ A = a | B = b ] \right] \right) = - \log \left( \E_{b \sim B} \left[ 2^{- H_\infty (A | B = b ) } \right] \right)	
	\end{equation*}
\end{definition}

This leads to the following proposition
\begin{proposition} \label{prop:l-2-min}
    \begin{equation*}
        \E_{b \sim B} \left[ \sum_{a} \Pr [ A = a | B = b]^2 \right] \leq 2^{- \tilde{H}_\infty (A | B )}
    \end{equation*}
\end{proposition}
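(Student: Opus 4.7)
The plan is to obtain the bound by a pointwise comparison at each fixed value of $B$ and then take expectation, using only the elementary inequality $\sum_a p_a^2 \leq (\max_a p_a)(\sum_a p_a)$ for any probability vector $(p_a)$.

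First I fix an arbitrary $b$ in the support of $B$ and set $p_a := \Pr[A = a \mid B = b]$. Since $\sum_a p_a = 1$, the elementary inequality yields
\begin{equation*}
\sum_a \Pr[A=a \mid B=b]^2 \;\leq\; \max_a \Pr[A=a \mid B=b] \;=\; 2^{-H_\infty(A \mid B = b)},
\end{equation*}
where the last equality is the definition of min-entropy conditioned on the event $B=b$.

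Next I average over $b \sim B$. Since the inequality above holds pointwise, linearity of expectation gives
\begin{equation*}
\E_{b \sim B}\!\left[\sum_a \Pr[A=a \mid B=b]^2\right] \;\leq\; \E_{b \sim B}\!\left[2^{-H_\infty(A \mid B=b)}\right] \;=\; 2^{-\tilde{H}_\infty(A \mid B)},
\end{equation*}
where the last equality is precisely the definition of average min-entropy. This yields the desired bound.

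There is no real obstacle: the proof is a two-line calculation once one matches the elementary bound $\sum_a p_a^2 \leq \max_a p_a$ against the definition of $\tilde{H}_\infty$. The only subtlety worth flagging is that the inequality is tight exactly when $A \mid B=b$ is a point mass for each $b$, which also reflects when average min-entropy and the collision entropy coincide; this is consistent with Fact \ref{fact:renyi} applied conditionally.
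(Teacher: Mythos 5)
Your proof is correct and follows essentially the same approach as the paper's: the paper writes $\sum_a \Pr[A=a\mid B=b]^2 = 2^{-H_2(A\mid B=b)} \leq 2^{-H_\infty(A\mid B=b)}$ by invoking Fact~\ref{fact:renyi}, whereas you unfold that inequality into the elementary bound $\sum_a p_a^2 \leq \max_a p_a$, but these are identical statements. The averaging step and appeal to the definition of $\tilde{H}_\infty$ are the same in both.
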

\begin{proof}
    \begin{equation*}
        \sum_{a} \Pr [ A = a | B = b]^2 = 2^{ - H_2 (A | B = b ) } \leq 2^{ - H_\infty (A | B = b ) } 
    \end{equation*}
    Therefore, 
    \begin{equation*}
        \E_{b \sim B} \left[ \sum_{a} \Pr [ A = a | B = b]^2 \right] \leq \E_{b \sim B} \left[ 2^{- H_\infty (A | B = b )} \right] = 2^{ - \tilde{H}_\infty (A | B ) }
    \end{equation*}
    where the last equality holds from the definition of average min-entropy.
\end{proof}

\begin{lemma}[Lemma 2.2 of \cite{dodis_fuzzy_2008}] \label{lem:dors}
	Let $A,B$ be random variables. Then if $B$ has at most $2^{\lambda}$ possible values, then 
	\begin{equation*}
		\tilde{H}_\infty ( A | B ) \geq \tilde{H}_\infty (A, B ) - \lambda \geq H_\infty ( A ) - \lambda.
	\end{equation*}
\end{lemma}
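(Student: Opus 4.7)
The plan is to prove both inequalities by a direct unfolding of the definition, with no appeal to KL-divergence, Jensen, or any of the heavier facts developed in the preliminary section. The main observation is that after using Bayes' rule, average min-entropy conditional on $B$ is just the negative log of a sum of joint-distribution maxima, which is easy to bound crudely when $B$ has only $2^\lambda$ values.

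Concretely, first I would rewrite
\begin{equation*}
    \tilde{H}_\infty(A \mid B) \;=\; -\log \sum_{b} \Pr[B=b] \cdot \max_a \Pr[A=a \mid B=b] \;=\; -\log \sum_b \max_a \Pr[A=a,B=b],
\end{equation*}
using that $\Pr[B=b]\Pr[A=a\mid B=b]=\Pr[A=a,B=b]$ and that $b$ can be pulled inside the maximum since it is fixed. For the first inequality, I would then bound each summand by the joint maximum: $\max_a \Pr[A=a,B=b] \leq \max_{a',b'} \Pr[A=a',B=b'] = 2^{-H_\infty(A,B)}$. Since there are at most $2^\lambda$ values of $b$ in the support, the sum is at most $2^\lambda \cdot 2^{-H_\infty(A,B)}$, and taking logs gives $\tilde{H}_\infty(A \mid B) \geq H_\infty(A,B) - \lambda$. (I read the middle term $\tilde{H}_\infty(A,B)$ in the statement as ordinary $H_\infty(A,B)$, since no variable is being conditioned on.)

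For the second inequality $H_\infty(A,B) \geq H_\infty(A) - \lambda$, I would start from the marginal identity $\Pr[A=a] = \sum_b \Pr[A=a,B=b]$ and again use that the sum has at most $2^\lambda$ nonzero terms to bound it by $2^\lambda \cdot \max_{a',b'}\Pr[A=a',B=b'] = 2^\lambda \cdot 2^{-H_\infty(A,B)}$. Taking the maximum over $a$ and then a log yields $H_\infty(A) \leq H_\infty(A,B) + \lambda$, as desired.

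There is essentially no obstacle here; the whole content of the lemma is that when you crudely upper bound a sum of $2^\lambda$ nonnegative quantities by $2^\lambda$ times their max, you lose only $\lambda$ bits of entropy. The only place to be mildly careful is in the first step, where one must recognize that $\tilde{H}_\infty(A\mid B)$ rewrites as a sum of joint maxima (rather than leaving it as an expectation over $B$ of max of conditionals), since this is what lets the crude $2^\lambda$-term bound kick in uniformly.
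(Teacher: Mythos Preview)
The paper does not supply its own proof of this lemma; it is simply quoted from Dodis et al., so there is no paper argument to compare against. Your proof of the first inequality is correct and is exactly the standard one.

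Your argument for the second inequality, however, has the direction reversed. From $\Pr[A=a] \le 2^{\lambda}\cdot 2^{-H_\infty(A,B)}$ you obtain, after taking the maximum over $a$ and negating the log, $H_\infty(A) \ge H_\infty(A,B) - \lambda$, i.e.\ $H_\infty(A,B) \le H_\infty(A) + \lambda$, which is the \emph{opposite} of the conclusion ``$H_\infty(A) \le H_\infty(A,B)+\lambda$'' you wrote. Moreover, the second inequality in the chain is actually $\tilde H_\infty(A,B) - \lambda \ge H_\infty(A) - \lambda$, i.e.\ $H_\infty(A,B) \ge H_\infty(A)$ with no loss of $\lambda$ at all. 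This is a one-liner: $\Pr[A=a,B=b] \le \Pr[A=a]$ for every $a,b$, so $\max_{a,b}\Pr[A=a,B=b] \le \max_a \Pr[A=a]$, hence $H_\infty(A,B) \ge H_\infty(A)$. No crude $2^\lambda$-term bound is needed for this part.
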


We are now ready to prove \pref{lem:collision} 

\begin{proofof}{\pref{lem:collision}}
First, observe that \pref{lem:dors} implies 
\begin{equation*}
    - \log \left( \E_{b \sim B} \left[ 2^{- H_\infty (A | B = b ) } \right] \right)	\geq H_\infty (A) - \lambda
\end{equation*}
which is equivalent to (by switching sides and taking as exponents)
\begin{equation*}
    2^{- H_\infty (A) + \lambda } \geq \E_{b \sim B} \left[ 2^{- H_\infty (A | B = b ) } \right]
\end{equation*}
Now using \pref{fact:renyi}, you get 
\begin{equation*}
\E_{b \sim B} \left[ 2^{- H_\infty (A | B = b ) }  \right]	\geq \E_{b \sim B} \left[ 2^{- H_2 (A | B = b ) } \right] = \E_{b \sim B} \left[ \sum_{a} \Pr [ A = a | B = b ]^2 \right].
\end{equation*}
which is the desired inequality.
\end{proofof}

Using the $\ell_2$ norm, the following claim (so-called Lindsey's Lemma) bounds the discrepancy of the Hadamard matrix under a product distribution.
\begin{claim}[Lindsey’s Lemma] \label{cl:hadamard}
Let $H$ be a Hadamard matrix. Let $P$ and $Q$ be distributions. Then 
\begin{equation*}
    P^{T} H Q \leq \norm{ P }_2 \norm{ Q }_2 \cdot 2^{n/2}
\end{equation*}
\end{claim}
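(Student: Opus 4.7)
The plan is to prove Lindsey's Lemma by exploiting the spectral structure of the Hadamard matrix. Recall that a $2^n \times 2^n$ Hadamard matrix $H$ has $\pm 1$ entries and satisfies the orthogonality relation $H H^T = 2^n \cdot I$. This means that $2^{-n/2} H$ is orthogonal, so every singular value of $H$ equals $2^{n/2}$, and in particular the operator norm $\|H\|_{op} = 2^{n/2}$.

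Given this, the bound falls out in two short steps. First I would apply the Cauchy--Schwarz inequality, viewing $P^T H Q$ as the inner product of $P$ with the vector $HQ$:
\begin{equation*}
    P^T H Q \;=\; \langle P, HQ \rangle \;\leq\; \|P\|_2 \cdot \|HQ\|_2.
\end{equation*}
Second I would bound $\|HQ\|_2$ using the operator norm:
\begin{equation*}
    \|HQ\|_2 \;\leq\; \|H\|_{op} \cdot \|Q\|_2 \;=\; 2^{n/2} \cdot \|Q\|_2,
\end{equation*}
where the last equality follows from $H H^T = 2^n I$ (either by computing $\|HQ\|_2^2 = Q^T H^T H Q = 2^n \|Q\|_2^2$ directly, or by noting the singular value statement above). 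Chaining the two inequalities yields $P^T H Q \leq \|P\|_2 \|Q\|_2 \cdot 2^{n/2}$, as desired.

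There is essentially no obstacle here: the only ``content" is the orthogonality identity $HH^T = 2^n I$, which is the defining property of a Hadamard matrix. The rest is Cauchy--Schwarz plus the observation that an orthogonal matrix (up to the scaling $2^{n/2}$) preserves the $\ell_2$-norm. The lemma is tight in the sense that it is matched by taking $P$ and $Q$ to be the uniform distribution, in which case $P^T H Q$ reduces to the average of the entries of $H$ (which can be made to match the bound up to constants for appropriate sub-distributions). Note that although the statement refers to $P$ and $Q$ as ``distributions", the proof never uses non-negativity or normalization of $P$ and $Q$; it holds for arbitrary vectors, with $\|\cdot\|_2$ being the usual Euclidean norm.
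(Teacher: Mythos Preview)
Your proof is correct and is essentially identical to the paper's: both use that the operator norm of the Hadamard matrix is $2^{n/2}$ (from $HH^T = 2^n I$) and then bound the bilinear form $P^T H Q$ by $\|H\|_{op}\|P\|_2\|Q\|_2$. You spell out the Cauchy--Schwarz step explicitly where the paper just invokes the definition of the operator norm, but the argument is the same.
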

\begin{proof}
    Recall that the operator norm of the Hadamard matrix is exactly 
    \begin{equation*}
        \norm{H}_2 = 2^{n/2}
    \end{equation*}
    Then by the definition of the operator norm,
    \begin{equation*}
        \frac{P^{T} H Q }{\norm{ P }_2 \norm{ Q }_2 } \leq \norm{H}_2 = 2^{n/2}
    \end{equation*}
    Rearranging the inequality, we obtain the desired claim.
\end{proof}

\section{Main Proof}

In this section, we prove the following main theorem.

\begin{theorem}[Main] \label{thm:main}
    For every $t, s, m \geq \omega(1)$ and a collection $\cS \subset \Field_2^n$ such that $m \geq \omega ( s \cdot 2^{5 t} )$, and $\log |\cS| \geq 10^4 \cdot t \log s$, there must exist some $m$ linear functions from $\cS$ say $\cQ$ that does not admit a data structure using $s$-space and $t$-probes per query. 
\end{theorem}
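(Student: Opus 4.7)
The plan is to prove the contrapositive: assume that a uniformly random size-$m$ subset $\cQ \subset \cS$ admits an $(s,t)$ cell-probe data structure with non-trivial probability, and derive a contradiction via a Ko--Weinstein style communication reduction followed by a Hadamard discrepancy bound. Sample $\cQ$ uniformly over size-$m$ subsets of $\cS$ (equivalently the matrix $M \in \Field_2^{m \times n}$), $X$ uniformly in $\Field_2^n$, and the query index $i$ uniformly in $[m]$. I convert the data structure into a three-party NOF protocol: Merlin (holding everything) sends the pre-processed state $U \in \{0,1\}^s$ to Bob. Alice (holding $M,i$) knows the query decision tree $T_{M,i}$, picks a uniformly random root-to-leaf path, and sends its description --- the probe sequence in $[s]^t$, the assumed cell values, and the leaf output --- using $|A| = O(t \log s)$ bits. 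Bob (who has $X, U, A, i$) outputs the leaf answer if $U$ agrees with the $t$ assumed values, and otherwise flips a fair coin. The leaf that is consistent with $U$ is selected with probability $\geq 2^{-t}$, so the protocol outputs $\ip{M_i, X}$ with advantage at least $\tfrac{1}{2} \cdot 2^{-t}$ over random guessing.

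The next step is entropy bookkeeping. Write $T = (U, A, i, M_{-i})$. Marginally $M_i$ is uniform in $\cS \setminus M_{-i}$ and independent of $X$. By \pref{lem:dors}, Alice's $O(t\log s)$-bit message shrinks the average min-entropy of $M_i$ by at most $O(t \log s)$, while Merlin's $s$-bit message contributes on average only $s/m$ bits of information about $M_i$ across the random index $i$ (by the mutual-information chain rule), which under $m \geq \omega(s \cdot 2^{5t})$ is $o(2^{-5t})$ per query. Combined with $\log|\cS| \geq 10^4 \cdot t \log s$ and a pass to a good transcript via \pref{cl:lowL2}, this yields for a typical transcript
\[
\tilde{H}_\infty(M_i \mid T, X) \;\geq\; \log|\cS| - O(t \log s), \qquad \tilde{H}_\infty(X \mid T) \;\geq\; n - O(t \log s).
\]
\pref{prop:l-2-min} then converts these into $\ell_2$-norm upper bounds on the conditional distributions $P$ of $M_i$ and $Q$ of $X$. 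Finally, I invoke Lindsey's Lemma (\pref{cl:hadamard}): on a typical transcript Bob's output is a deterministic $\pm 1$ function $b(X, T)$, and its bias with respect to $(-1)^{\ip{M_i, X}}$ equals a quadratic form with the Hadamard kernel, hence is bounded by $\|P\|_2 \cdot \|Q\|_2 \cdot 2^{n/2}$. Plugging in the min-entropy estimates gives an overall bias at most $2^{-\log|\cS|/2 + O(t \log s)}$, which under $\log|\cS| \geq 10^4 \cdot t \log s$ is $\ll 2^{-t}$ --- contradicting the advantage from the first step and completing the proof.

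The main obstacle, as the author flags, is operating in the exponentially small $2^{-t}$-advantage regime. The original Ko--Weinstein / Dvo\v{r}\'ak analyses rely on KL divergence and incur constant additive losses per conditioning step, which would overwhelm a $2^{-t}$ advantage. The crux is to replace KL-based conditioning with the average min-entropy framework of \pref{lem:dors}, whose degradation under conditioning is linear in message length rather than multiplicative in advantage; combined with \pref{prop:l-2-min} and Lindsey's Lemma, this preserves the $\ell_2$ bound tightly enough to still beat $2^{-t}$. The delicate part to handle carefully is the amortization of Merlin's $s$-bit contribution over the random $i \in [m]$: the condition $m \geq \omega(s \cdot 2^{5t})$ is calibrated precisely so that the per-query share of Merlin's information is $o(2^{-5t})$, and the main technical burden is converting this expected information loss into a high-probability min-entropy lower bound that fits the average min-entropy framework without accruing extra constants that would close the gap.
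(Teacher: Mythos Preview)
Your overall strategy matches the paper's --- compress to a single Alice--Bob exchange with $2^{-t}$ advantage, control min-entropies via \pref{lem:dors}, and finish with Lindsey --- but there is a real gap in the entropy bookkeeping. You set $T = (U, A, i, M_{-i})$ and assert $\tilde{H}_\infty(X \mid T) \geq n - O(t\log s)$. This is false: $T$ contains Merlin's entire $s$-bit state $U = U(M, X)$, and by \pref{lem:dors} conditioning on $U$ can drop the min-entropy of $X$ by $s$, not by $O(t\log s)$. Your amortization ``$U$ contributes only $s/m$ bits about $M_i$ on average over $i$'' is correct for $M_i$ because there are $m$ rows to spread the cost over, but there is only one $X$ and $U$ does not vary with $i$, so no such amortization is available (e.g.\ if $U$ simply stores $s$ coordinates of $X$, then $\tilde{H}_\infty(X \mid T) = n - s$ for every $i$). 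Plugging $n - s$ into Lindsey yields a bias bound of order $2^{(s - \log|\cS|)/2 + O(t\log s)}$, which is vacuous once $s \gg n \geq \log|\cS|$. A related casualty: once you condition on $U$, the pair $(M_i, X)$ is no longer product, so the bias is not literally $P^T H Q$ in the marginals; you would also need a Pinsker step you have not included.

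The paper's remedy is precisely to keep $U$ \emph{out} of the conditioning. Its transcript $Z$ consists of $Z_{Alice}$ --- a function of $\vec{M}$ and public randomness only, so $H_\infty(X \mid Z_{Alice}) = n$ exactly --- together with Bob's two bits $(B_{\rho_j}, Z_{\rho_j})$, giving $\tilde{H}_\infty(X \mid Z) \geq n - 2$. The price is that $M_{\cP_J}$ and $X$ are now correlated through the unseen $U$; the paper bounds this via a direct-sum over a random length-$\ell$ sequence $\cP$ of query indices with a random stopping time $J$, obtaining $I(M_{\cP_J}; X \mid Z, E) \leq 2s/\ell = O(s/m) = o(2^{-5t})$. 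This small mutual information is then absorbed inside the combinatorial lemma (\pref{lem:comb}) via Pinsker, contributing an additive $O(\sqrt{s/m}) \ll 2^{-t}$ to the bias, after which Lindsey applies to the product of the marginals. In short, your instinct to amortize Merlin's message is right, but it must be used to bound the \emph{correlation} $I(M_i; X \mid Z)$ rather than to justify a min-entropy bound on $X$ after conditioning on $U$.
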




\subsection{Compression and Random Process}

We prove this by contradiction. We will start with the following assumption. Suppose for any $M_1, \ldots, M_m \in \cS \subseteq \Field_2^n$ (these are fixed and publicly known), there exists a data structure under the cell-probe model which pre-processes any given $X \in \Field_2^n$ using $s$-space and answers $\chi_{M_i} ( X ) := ( - 1 )^{\ip{M_i,X}}$ for any given $i \in [ |\cQ| ] = [m]$ using $t$-probe. This implies the following procedure exists in the cell-probe model.

\begin{Protocol}
    \begin{enumerate}
        \item The querier is given $\vec{M} = M_1, \ldots, M_m$. Given $i \in [m]$, the querier makes $t$ queries to $U$, using a decision tree $\cT$ of depth $t$.
        \item At the leaf node of the deicision tree $\cT$, the querier outputs the final guess in $\{ \pm 1 \}$ which is guaranteed to be $\chi_{M_i} ( X )$.
    \end{enumerate}
    \caption{ Static Data structure with adaptive probes }
\end{Protocol} 

We now argue that an effective static data structure implies an too-good-to-be true communication process in the so-called Multiphase Communication Game. While \Patrascu~\cite{patrascu_mihai_towards_2010} considered computing the Disjointness of $M_i$ and $X$, here we consider computing the inner product over mod 2 between $M_i$ and $X$.

The above data structure implies the existence of the following $(t+1)$-round communication process (between 3 players) under the independent uniform distribution for all $M_1, \ldots, M_m \in \cS \subseteq \Field_2^n$ and uniform distribution for $X \in \Field_2^n$ where Merlin sends a single-shot message of length $s$, $U(M_1, \ldots, M_m, X)$ to Bob. Then Alice and Bob proceed in $t$-round communication to compute $\chi_{M_i} ( X )$. By the correctness of the underlying static data structure, this process correctly outputs the inner product of $M_i$ and $X$ for any values of $i \in [m]$.

\begin{Protocol}
    \begin{enumerate}
        \item Alice holds $M_1, \ldots, M_m$ and $i \in [m]$. Bob holds $X$ and $i \in [m]$. Merlin holds $M_1, \ldots, M_m, X$.
        \item Merlin sends $U (M_1, \ldots, M_m, X)$ to Bob using $s$ bits.
        \item Alice sends a location $q \in [s]$ (given in her decision tree $\cT$) using $\log s$-bits to Bob. Bob replies with the queried bit $U_q$.
        \item Repeat the Step 3 for $t$ rounds.
        \item Alice announces $\chi_{M_i} ( X )$. 
    \end{enumerate}
    \caption{ $t + 1$-round Communication Process \label{prot:multiphase}} 
\end{Protocol} 

We denote this as a process, rather than a protocol, to contrast with the usual two-party communication protocol. Due to the side information $U$ that Bob has on Alice's input, the usual techniques for two-party communication complexity (namely cut-and-paste or rectangular property) do not hold. Therefore, we call this a communication process instead of a communication protocol.

\paragraph{Compression Overview}
The hope is to show that if $s$ is sufficiently small then $t$ must be large for such a $(t+1)$-round communication process. Unfortunately, current techniques are insufficient to provide a complete lower bound for such a general process. Instead, we would like to ``compress" the communication between Alice and Bob with a loss in parameters, resulting in a 3-round communication process between Merlin, Alice, and Bob.
We can manage this setting from the technique developed in \cite{ko_adaptive_2020}.

The high-level intuition is for Alice to randomly sample a path in her decision tree, along with the final answer. Bob simply checks whether or not queried bits indeed matches with $U$. If it matches, Bob announces the answer as given in Alice's message. Otherwise, Bob simply makes an independent random guess. We formally state the compression as following. 

\paragraph{Formal Compression}
Here is the formal description of our compressed 3-round communication process. 


\begin{Protocol}
    \begin{enumerate}
        \item Merlin sends $U (M_1, \ldots, M_m, X)$ to Bob using $s$ bits.
        
        \item Let $P_i \in ( [s], \{0,1\} )^{t} \times \{ \pm 1 \} $, namely query $i$'s (i) the sequence of the addresses probed by the probing algorithm; (ii) the probed result and; (iii) the final answer (i.e. the contents of a path in the decision tree used by Alice) Note that there are at most $2^t$ many possible $P_i$. Alice picks a $P_i = p_i$ such that
        \begin{equation*}
            p_i := \argmax \Pr_{X} [ P_i = p_i | \vec{M} = \vec{m}]
        \end{equation*}
        then sends to Bob using $t \log s + t + 1$ bits.
        
        \item Bob checks if $P_i$ agrees with $U$. If yes, set $B_i$ as 1. Output $Z_i$, which is the notation for Bob's guess of $\chi_{M_i} (X)$, as the final output value in $P_i$.  Otherwise, set $B_i$ as 0. Output $Z_i$ as $\pm 1$ each with probability $1/2$ independently at random. 
    \end{enumerate}
    \caption{ Communication Process between Alice and Bob for Adaptive Probe \label{prot:AliceBob_adaptive} } 
\end{Protocol}

\begin{remark} \label{rem:multiphase}
    Observe that our argument works against even a general three party communication, where after Merlin's message, Alice and Bob proceed to communicate $t$ bits using $t$-rounds of communication (i.e. a single bit per round). As long as Alice's bits are a function of previous transcript, and her input $(M_1, \ldots, M_m, i)$; and Bob's bits are a function of previous transcript, his input $(X, i)$ and Merlin's message, the above compression scheme works. Bob simply needs to check if the response chosen by the path is consistent with his actual response.
\end{remark}
Due to \pref{rem:multiphase}, the rest of proof also works against arbitrary Number-On-Forehead three party communication, thereby attacking the Multiphase Conjecture (Communication). However, for conciseness, we stick to above notations against static data structure lower bound.

\paragraph{Random Process from \pref{prot:AliceBob_adaptive}}

Here is the plan for the remainder of the proof. Given a too-good-to-be true compressed communication process (\pref{prot:AliceBob_adaptive}), we would like to ``extract" a random process $Z$ from our compressed communication process which would violate some combinatorial property. We will show that some choice of $Z$ (after some conditioning $E$ which is introduced due to technical reasons) satisfies three properties simultaneously: conditioned on $Z$, large average Min-Entropy on $M_{i}$ and $X$, low correlation between $M_{i}$ and $X$, while achieving a good advantage over randomly guessing $\chi_{M_{i}} (X)$. Then in \pref{sec:combinatorial}, we will then show that such a choice of $Z$ cannot exist, resulting in a contradiction.

Given the formal description of our compression, we formally state our random process $Z$. Pick a sequence of random distinct numbers $\cP$ of length $\ell \leq m / 100$ in $[m]$, an instance of which we will denote as $(\rho_1, \ldots, \rho_\ell)$, and random $J \in [\ell]$, an instance of which we will denote as $j$. Then we write 
\begin{equation*}
    Z := \cP, J, P_{ \cP_{< J} }, M_{ \cP_{<J} }, P_{ \cP_{J} }, B_{\cP_{J}}, Z_{\cP_{J}}
\end{equation*}
which is the set of random ``path" of coordinates along with Alice's compressed messages and $S_i$'s along the path. We also include Bob's output (whether or not $P_{ \cP_{J} }$ agrees with $U$) for the particular $J$ of our choosing. For brevity, we write $Z_{Alice}$ for
\begin{equation*}
    Z_{Alice} := \cP, J, P_{ \cP_{< J} }, M_{ \cP_{<J} }, P_{ \cP_{J} }
\end{equation*}
that is $Z$ without $B_{\cP_{J}} $ (i.e. Bob's message) and the final guess $Z_{\cP_{J}}$ due to Bob's message. We emphasize here that $Z_{Alice}$ is completely determined by $M_1, \ldots, M_m = \vec{M}$ and independent randomness $\cP, J$, and has no dependence on $X$ (Bob's input) which is a crucial property of $Z_{Alice}$ and why the variable is named so.

\subsubsection{Large Average Min-Entropy}

In this section, using the communication process $Z$, first, we argue on large average Min-Entropy for $M_{\cP_J}$ and $X$ conditioned on $Z$. As a comparison, we remark that \cite{ko_adaptive_2020, dvorak_lower_2020} extract $Z = z$ with a small KL-divergence which is directly implied by small mutual information between $Z$ and respective $M_{\cP_J}$ and $X$.

We start with the following lemma which directly follows from the technique used in \cite{ko_adaptive_2020}.
\begin{lemma} \label{lem:S}
    If $|\cP| = \ell = |\cQ|/100$ then
    \begin{equation*}
        I ( M_{\cP_J} ; Z_{Alice} ) = \E_{z_{Alice}} \left[ D( M_{\cP_J} |_{Z_{Alice} = z_{Alice} } || M_{\cP_J} ) \right]  \leq 3 t \log s
    \end{equation*}
\end{lemma}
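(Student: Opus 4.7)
The plan is to combine a direct-sum chain-rule argument with the fact that each Alice message $P_i$ in \pref{prot:AliceBob_adaptive} occupies at most $t \log s + t + 1$ bits.

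First, I would peel off $\cP, J$ by independence. Since $\cP, J$ are drawn independently of $\vec{M}$ and the $M_i$'s are i.i.d.\ uniform over $\cS$, the marginal of $M_{\cP_J}$ is uniform on $\cS$ regardless of $(\cP, J)$, so $I(M_{\cP_J}; \cP, J) = 0$ and the chain rule reduces the target to $I(M_{\cP_J}; M_{\cP_{<J}}, P_{\cP_{\leq J}} | \cP, J)$. Then, conditioning on a fixed $\cP = \rho$ of distinct indices and writing $A_k := M_{\rho_k}$, $B_k := P_{\rho_k}$, the variables $A_1, \ldots, A_\ell$ are i.i.d., so $I(A_j; A_{<j} | \rho) = 0$ and another chain rule step collapses the inner quantity to $I(A_j; B_{\leq j} | A_{<j}, \rho)$.

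Next, I would exploit the random choice of $J \in [\ell]$ via a direct-sum step. Monotonicity of mutual information gives $I(A_j; B_{\leq j} | A_{<j}, \rho) \leq I(A_j; B_{\leq \ell} | A_{<j}, \rho)$, and the chain rule telescopes the resulting sum:
\begin{equation*}
\sum_{j=1}^\ell I(A_j; B_{\leq j} | A_{<j}, \rho) \;\leq\; I(A_{\leq \ell}; B_{\leq \ell} | \rho) \;\leq\; H(B_{\leq \ell} | \rho) \;\leq\; \ell \cdot (t \log s + t + 1) \;\leq\; 3 t \log s \cdot \ell,
\end{equation*}
where the third inequality uses that each $B_k$ is a fixed bit-string of the stated length. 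Dividing by $\ell$ to average over $J$ and taking expectation over $\rho$ delivers the claimed bound $I(M_{\cP_J}; Z_{Alice}) \leq 3 t \log s$.

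The only real conceptual point is the $1/\ell$ ``amortization'' obtained by randomizing $J$: the full transcript $B_{\leq \ell}$ can carry as much as $\ell \cdot O(t \log s)$ bits of information about $\vec{M}$, but spreading this across the $\ell$ possible target coordinates keeps the per-coordinate information at $O(t \log s)$, independent of $\ell$. None of the steps poses a serious obstacle; the argument follows the standard direct-sum / chain-rule template used in \cite{ko_adaptive_2020, dvorak_lower_2020}, and reduces to choosing the right decomposition of $Z_{Alice}$ into the part independent of $M_{\cP_J}$ and the part whose entropy is controlled by the message length.
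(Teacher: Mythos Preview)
Your proof is correct. Both your argument and the paper's use the same chain-rule / direct-sum template, but the amortization is done over different random indices. The paper first peels off $P_{\cP_J}$ at cost $\leq t\log s + t + 1$, then for the remaining term $I(M_{\cP_J}; P_{\cP_{<J}} \mid M_{\cP_{<J}}, \cP, J=j)$ it exploits that $\cP_j$ is uniform over the $m-(j-1)$ still-unused indices: since $P_{\rho_{<j}}$ carries at most $(j-1)(t\log s + t + 1)$ bits about $\{M_i : i \notin \rho_{<j}\}$ in total, the average per index is at most $\tfrac{j}{m-\ell}(t\log s + t + 1) \leq 0.1\, t\log s$, and it is here that the hypothesis $\ell = |\cQ|/100$ is used. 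You instead amortize over the randomness of $J \in [\ell]$: each summand $I(A_j; B_{\leq j} \mid A_{<j})$ is majorized by $I(A_j; B_{\leq \ell} \mid A_{<j})$, and the telescope gives $I(A_{\leq \ell}; B_{\leq \ell}) \leq \ell(t\log s + t + 1)$, so averaging over $J$ yields $t\log s + t + 1$ directly. Your route is slightly cleaner, gives a marginally better constant, and never actually uses $\ell = |\cQ|/100$; the paper's route does, though of course both suffice for the lemma as stated.
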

We attach the proof of \pref{lem:S} in the appendix for completeness, as it is exactly the same as in \cite{ko_adaptive_2020}. What \pref{lem:S} guarantees is $Z_{Alice}$ such that conditioning on which has low divergence with original uniform distribution on $M_{\cP_J}$.

The following is the main lemma of this section, which combined with \pref{lem:S}, we can extract large average Min-Entropy part for $M_{\cP_J}$ and $X$.

\begin{lemma}[Large Average Min-Entropy] \label{lem:l2S}
    For any setting of fixed $Z_{Alice} = z_{Alice}$ such that 
    \begin{equation*}
        D( M_{\cP_J} |_{Z_{Alice} = z_{Alice}} || M_{\cP_J} ) \leq \beta
    \end{equation*} 
    with $\beta > 2$, there exists an associated event $E$ with 
    \begin{align*}
        & \Pr \left[ E | Z_{Alice} = z_{Alice} \right] \geq 3/4 \\
        & H ( E | Z_{Alice}, M_{\cP_J} ) = 0 \\
        & \tilde{H}_\infty ( M_{\rho_j} | Z_{\rho_j} B_{\rho_j}, E = 1, Z_{Alice} = z_{Alice} )  + \tilde{H}_\infty ( X | Z_{\rho_j} B_{\rho_j}, E = 1, Z_{Alice} = z_{Alice} ) \\
        & \geq  n + \log |\cS| - 10 \beta       
     \end{align*}
\end{lemma}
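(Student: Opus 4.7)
The plan is to convert the KL-divergence hypothesis into a pair of average min-entropy bounds via three moves: first upgrade ``small KL to uniform'' into ``high min-entropy after restricting to a high-probability event $E$'' using \pref{cl:lowL2}; then transfer the same picture to $X$ via its independence from $(\vec{M}, \cP, J)$; finally absorb Bob's short message $(Z_{\rho_j}, B_{\rho_j})$ via \pref{lem:dors}. Concretely, I first apply \pref{cl:lowL2} with $\alpha = 4$ to the conditional distribution $\cD := M_{\cP_J} \mid Z_{Alice} = z_{Alice}$, whose uniform reference is $\mathrm{Unif}(\cS)$. The claim supplies an event $E$ which is a deterministic function of $(Z_{Alice}, M_{\cP_J})$ --- giving the second bullet $H(E \mid Z_{Alice}, M_{\cP_J}) = 0$ immediately --- with $\Pr[E = 1 \mid Z_{Alice} = z_{Alice}] \geq 3/4$ and $H_\infty(M_{\cP_J} \mid Z_{Alice} = z_{Alice}, E = 1) \geq \log|\cS| - 8\beta$. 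Since $Z_{Alice}$ determines $\rho_j$, this is exactly a min-entropy bound on $M_{\rho_j}$.

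Next I leverage the product structure of the hard distribution. Because $(\vec{M}, X)$ is a product measure and $(\cP, J)$ is independent randomness, and because Alice's compressed message $P_i$ was defined by $p_i := \argmax_{p_i}\Pr_X[P_i = p_i \mid \vec{M} = \vec{m}]$, the variable $Z_{Alice}$ is a function of $(\vec{M}, \cP, J)$ alone, and so is $E$. Hence conditioning on $(Z_{Alice} = z_{Alice}, E = 1)$ leaves $X$ uniform on $\Field_2^n$, giving $H_\infty(X \mid Z_{Alice} = z_{Alice}, E = 1) = n$, and keeps $X$ independent of $M_{\rho_j}$ in this conditional world. Now the pair $(Z_{\rho_j}, B_{\rho_j})$ takes at most $2 \cdot 2 = 4$ values, so applying \pref{lem:dors} with $\lambda = 2$ inside the conditional world --- once with $A = M_{\rho_j}$ and once with $A = X$ --- reduces each average min-entropy by at most $2$. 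Summing the two bounds yields $\log|\cS| + n - 8\beta - 4$, and because $\beta > 2$ one has $8\beta + 4 \leq 10\beta$, which is the desired inequality.

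The argument is almost entirely bookkeeping around the definitions; the only step that requires genuine care is verifying that $X$ contributes no information to either $Z_{Alice}$ or $E$, which is exactly what the $\argmax$ choice of $P_i$ was arranged to guarantee --- otherwise a hidden $X$-dependence in $E$ would spoil the independence needed to get the full $n$ bits of min-entropy on $X$. Once that independence is in hand, \pref{cl:lowL2} and \pref{lem:dors} plug in directly, with no information-theoretic intricacies beyond the elementary loss of two bits from conditioning on Bob's message.
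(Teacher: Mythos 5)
Your proposal is correct and follows essentially the same route as the paper: invoke \pref{cl:lowL2} with $\alpha=4$ to obtain the event $E$ (deterministic in $(Z_{Alice}, M_{\cP_J})$) and the $\log|\cS|-8\beta$ min-entropy bound on $M_{\rho_j}$; use the fact that $Z_{Alice}$ and $E$ are functions of $(\vec{M},\cP,J)$ alone to keep $X$ uniform with $H_\infty = n$; then apply \pref{lem:dors} with $\lambda=2$ to both sides before summing and absorbing the additive $4$ into $2\beta$ via $\beta>2$. The paper merely packages the two min-entropy claims as separate propositions (\pref{prop:El2}, \pref{prop:Tl2}) before combining; your write-up inlines them but the reasoning and arithmetic are the same.
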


Towards proving the lemma, we first prove the following two propositions.
\begin{proposition} \label{prop:El2}
    Suppose
    \begin{equation*}
        D( M_{\cP_J} |_{Z_{Alice} = z_{Alice}} || M_{\cP_J} ) \leq \beta.
    \end{equation*}
    Then there exists an event $E$ with $\Pr [ E | Z_{Alice} = z_{Alice} ] \geq 3/4$ such that 
    \begin{align*}
        & H ( E | Z_{Alice}, M_{\cP_J} ) = 0 \\ 
        & H_{\infty} ( M_{\rho_j} | Z_{Alice} = z_{Alice}, E = 1 ) \geq \log |\cS| - 8 \beta
    \end{align*}
\end{proposition}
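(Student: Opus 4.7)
The plan is to apply Claim~\ref{cl:lowL2} essentially as a black box, with the right identification of the ``distribution'' and the ``uniform reference.'' Under the prior, each $M_i$ is drawn uniformly and independently from $\cS$, so for any fixed value of $(\cP, J) = (\rho, j)$ (which is part of $z_{Alice}$), the marginal of $M_{\rho_j}$ is uniform on $\cS$. Thus $M_{\cP_J}$ itself is uniform on $\cS$, and I can set $\cD := M_{\cP_J}\mid_{Z_{Alice} = z_{Alice}}$ and $\cU := \text{Unif}(\cS)$. The hypothesis $D(\cD \| \cU) \leq \beta$ is exactly what Claim~\ref{cl:lowL2} requires.

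I then invoke Claim~\ref{cl:lowL2} with $\alpha = 4$ and $t = \beta$. This produces a subset $E^* \subseteq \cS$ (the collection of values $x$ with $\log(\cD(x)/\cU(x)) < \alpha \beta$) such that $\cD(E^*) \geq 1 - 1/\alpha = 3/4$ and $H_\infty(\cD|_{E^*}) \geq \log|\cS| - 2\alpha\beta = \log|\cS| - 8\beta$. I define the event $E$ (in the language of the proposition) as the indicator that $M_{\cP_J} \in E^*$. Then by construction $\Pr[E \mid Z_{Alice} = z_{Alice}] = \cD(E^*) \geq 3/4$, and $M_{\rho_j} \mid Z_{Alice} = z_{Alice}, E = 1$ is exactly $\cD|_{E^*}$, so its min-entropy is at least $\log|\cS| - 8\beta$ as required.

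It remains to verify the determinism condition $H(E \mid Z_{Alice}, M_{\cP_J}) = 0$. But this is immediate from the definition: $E^*$ depends only on $z_{Alice}$ (since the threshold set is a function of the conditional distribution $M_{\cP_J}\mid_{Z_{Alice}=z_{Alice}}$), so given $(Z_{Alice}, M_{\cP_J})$ the indicator $\mathbf{1}[M_{\cP_J} \in E^*]$ is fully determined, and $E$ has zero conditional entropy.

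The main (and only) point to be careful about is making sure the uniform reference $\cU$ in Claim~\ref{cl:lowL2} lines up with the marginal distribution of $M_{\cP_J}$, which in turn relies on the fact that $\cP, J$ are drawn independently of $\vec M$ (they are internal randomness of the process), so conditioning on $z_{Alice}$ does not alter the prior marginal of $M_{\cP_J}$ beyond what the KL bound $\beta$ captures. No further information-theoretic machinery is needed here; the real work will appear in the companion proposition bounding the min-entropy of $X$ given $Z_{\rho_j} B_{\rho_j}$, which is what feeds the sum in Lemma~\ref{lem:l2S}.
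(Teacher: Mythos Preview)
Your proof is correct and follows essentially the same approach as the paper: both apply Claim~\ref{cl:lowL2} with $\alpha=4$ to the conditional distribution $M_{\cP_J}\mid_{Z_{Alice}=z_{Alice}}$ against the uniform distribution on $\cS$, and read off the event $E$, the probability bound, the min-entropy bound, and the determinism of $E$ given $(Z_{Alice}, M_{\cP_J})$. Your write-up is in fact slightly more explicit than the paper's in justifying why the reference marginal of $M_{\cP_J}$ is uniform on $\cS$.
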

\begin{proof}
    Recall that \pref{cl:lowL2} gives an event $E$ where if the underlying distribution has low KL-divergence with the uniform distribution over $\cS$, conditioned on $E$, the distribution has large $H_\infty$. As
    \begin{equation*}
        D( M_{\rho_j} |_{Z_{Alice} = z_{Alice}} || M_{\rho_j} ) \leq \beta.
    \end{equation*}
    with $M_{\cP_J}$ being the uniform distribution over $\cS$, we can apply \pref{cl:lowL2} to give an event $E$ which is completely determined by $Z_{Alice}$ and $M_{\cP_J}$ (as $E$ is determined by $M_{\cP_J}$ under fixed $Z_{Alice}$) such that 
    \begin{equation*}
        H_{\infty} ( M_{\rho_j} | Z_{Alice} = z_{Alice}, E = 1 ) \geq \log |\cS| - 8 \beta
    \end{equation*}
    with 
    \begin{equation*}
        \Pr [ E | Z_{Alice} = z_{Alice} ] \geq 3/4
    \end{equation*}
    by setting the appropriate parameter ($\alpha = 4$ in \pref{cl:lowL2}).
\end{proof}
The next proposition is $H_\infty$ bound for $X$. 
\begin{proposition} \label{prop:Tl2}
    For any setting of $Z_{Alice} = z_{Alice}$, and $E = 1$, we have
    \begin{equation*}
        H_{\infty} \left( X | Z_{Alice} = z_{Alice}, E = 1 \right) = n.
    \end{equation*}
\end{proposition}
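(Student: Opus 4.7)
The plan is to observe that both $Z_{Alice}$ and the event $E$ are actually functions of $(\vec{M}, \cP, J)$ alone, and in particular carry no information about $X$. Since $X$ is drawn independently and uniformly from $\Field_2^n$, conditioning on these quantities cannot affect its distribution.

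More concretely, I would proceed in three short steps. First, recall the explicit definition
\[
Z_{Alice} = \cP, J, P_{\cP_{<J}}, M_{\cP_{<J}}, P_{\cP_J},
\]
and recall the paper's explicit remark (in the paragraph following the definition of $Z$) that $Z_{Alice}$ is a deterministic function of $\vec{M}$ together with the independent sampling randomness for $\cP$ and $J$: Alice's compressed messages $P_i$ are chosen as the most likely path under the marginal over $X$ for a fixed $\vec{m}$, which depends only on $\vec{m}$, not on the realized $X$. Second, recall from \pref{prop:El2} (and the proof-hypothesis $H(E\mid Z_{Alice}, M_{\cP_J}) = 0$) that $E$ is determined by $Z_{Alice}$ and $M_{\cP_J}$, so $E$ is itself a deterministic function of $(\vec{M}, \cP, J)$. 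Third, since the underlying joint distribution has $X$ drawn uniformly from $\Field_2^n$ independently of $\vec{M}$ and independently of the sampling randomness $(\cP, J)$, the pair $(Z_{Alice}, E)$ is independent of $X$.

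Consequently, for any fixing $Z_{Alice} = z_{Alice}$ with $E = 1$ (provided this event has positive probability, which is guaranteed by \pref{prop:El2}),
\[
\Pr[X = x \mid Z_{Alice} = z_{Alice}, E = 1] = \Pr[X = x] = 2^{-n}
\]
for every $x \in \Field_2^n$. Taking $-\log$ of the maximum over $x$ yields $H_\infty(X \mid Z_{Alice} = z_{Alice}, E = 1) = n$, which is the desired equality.

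There is essentially no obstacle here; the content of the statement is entirely the independence bookkeeping, which was precisely the motivation for decomposing $Z$ into the ``Alice-side'' part $Z_{Alice}$ in the first place. The only thing one must be careful about is to check that the argmax choice defining $P_i$ in \pref{prot:AliceBob_adaptive} is indeed a function of $\vec{M}$ only (and not of $X$), which follows directly from the definition, and that any ties are broken by a rule not depending on $X$.
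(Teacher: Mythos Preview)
Your proposal is correct and follows essentially the same approach as the paper: the paper's proof simply notes that $(Z_{Alice}, E)$ is a function of Alice's side (there phrased via $I(Z_{Alice}, E; X) \leq I(\vec{M}; X) = 0$), hence conditioning leaves $X$ uniform with $H_\infty = n$. Your version is a bit more explicit about the dependence on the sampling randomness $(\cP,J)$ and tie-breaking, but the argument is the same.
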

\begin{proof}
    First, by our setting of $Z_{Alice}$ and $E$, which depends only on $\vec{M} = M_1, \ldots, M_m$ (Alice's input),
    \begin{equation*}
        I ( Z_{Alice}, E ; X ) \leq I ( \vec{M} ; X ) = 0. 
    \end{equation*}
    Therefore, for any setting of $Z_{Alice}$ and $E = 1$, we get
    \begin{equation*}
        H_{\infty} \left( X | Z_{Alice} = z_{Alice}, E = 1 \right) = n.
    \end{equation*}
\end{proof}

We are now ready for the proof of \pref{lem:l2S}

\begin{proofof}{\pref{lem:l2S}}
    Due to \pref{prop:El2}, there exists an event $E$, that is completely determined by corresponding $S_{\rho_j}$ such that
    \begin{align*}
        & H ( E | Z_{Alice}, M_{\cP_J} ) = 0 \\
        & \Pr[ E | Z_{Alice} = z_{Alice} ] \geq 3/4 \\
        & H_{\infty} ( M_{\rho_j} | Z_{Alice} = z_{Alice}, E = 1 ) \geq \log |\cS| - 8 \beta
    \end{align*}
    which further implies that, due to \pref{lem:dors}, as $Z_{\rho_j} B_{\rho_j}$ is $2$-bit,
    \begin{equation}
        \tilde{H}_{\infty} ( M_{\rho_j} | Z_{\rho_j} B_{\rho_j}, Z_{Alice} = z_{Alice}, E = 1 ) \geq \log |\cS| - 8 \beta - 2. \label{eq:s_avg_min}
    \end{equation}
    Due to \pref{prop:Tl2}, and \pref{lem:dors}, for any setting of $Z_{Alice}$ we have
    \begin{align}
        \tilde{H}_{\infty} \left( X | Z_{\rho_j} B_{\rho_j}, Z_{Alice} = z_{Alice}, E = 1 \right) \geq n - 2 \label{eq:t_avg_min}
    \end{align}
    Then adding \pref{eq:s_avg_min} and \pref{eq:t_avg_min}, we obtain our desired lemma as
    \begin{align*}
        &  \tilde{H}_{\infty} \left( X | B_{\rho_j}, Z_{Alice} = z_{Alice}, E = 1 \right) + \tilde{H}_{\infty} ( M_{\rho_j} | B_{\rho_j}, Z_{Alice} = z_{Alice}, E = 1 ) \\
        & \geq n + \log |\cS| - 8 \beta  - 4 \geq n + \log |\cS| - 10 \beta.
    \end{align*}
    where the last inequality holds from our assumption $\beta > 2$.
\end{proofof}

\subsubsection{Low Correlation} 

Finally, we show that the correlation between $M_{\cP_J}$ and $X$ is small in expectation over random $Z$ and conditioned on $E$, which essentially follows the analogous argument in \cite{ko_adaptive_2020} using the Chain Rule in Mutual Information~(\pref{fact:chainrule2}). 

\begin{lemma}[Low Correlation] \label{lem:corr}
    \begin{equation*}
        I ( M_{\cP_J} ; X | Z , E = 1) \leq \frac{2 |U| }{|\cP|} = \frac{2 s}{\ell} 
    \end{equation*}
\end{lemma}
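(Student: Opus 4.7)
The plan is to adapt the low-correlation argument of \cite{ko_adaptive_2020}, with the factor of $2$ in the target bound absorbing the cost of conditioning on $E=1$. First, I would handle the $E$ conditioning directly. Because $E$ is a deterministic function of $(Z_{Alice}, M_{\cP_J})$ (as established in the proof of \pref{lem:l2S}), the chain rule gives
\[ I(M_{\cP_J}; X \mid Z, E) - I(M_{\cP_J}; X \mid Z) = H(E \mid Z, X) - H(E \mid Z) \le 0. \]
Combined with $\Pr[E=1] \ge 3/4$ from \pref{lem:l2S}, this yields $I(M_{\cP_J}; X \mid Z, E=1) \le \tfrac{4}{3}\, I(M_{\cP_J}; X \mid Z)$, so it suffices to prove $I(M_{\cP_J}; X \mid Z) \le \tfrac{3s}{2\ell}$.

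Next, I would set up the Ko-Weinstein-style direct-sum bound. Since $\vec M \perp X$ marginally and $|U| \le s$, we have $I(\vec M; X \mid U) \le H(U) \le s$. The chain rule over the $\ell$ coordinates of $\cP$ gives $\sum_{j=1}^\ell I(M_{\cP_j}; X \mid U, \cP, M_{\cP_{<j}}) = I(M_{\cP}; X \mid U, \cP) \le s$, and averaging over uniform $J \in [\ell]$ yields $\E_{\cP, J}[\, I(M_{\cP_J}; X \mid U, \cP, J, M_{\cP_{<J}})\, ] \le s/\ell$. Then, using $Z_{Alice} \perp X$ (since $Z_{Alice}$ is a function of $(\vec M, \cP, J)$) and the fact that $V = (B_{\cP_J}, Z_{\cP_J})$ is a function of $(U, Z_{Alice}, \text{coin})$ with an independent coin, I would apply chain-rule identities together with \pref{fact:chainrule2} to translate the $U$-conditioning into $Z$-conditioning, arriving at
\[ I(M_{\cP_J}; X \mid Z) \le I(M_{\cP_J}; X \mid Z_{Alice}, U) + I(M_{\cP_J}; U \mid Z_{Alice}, X), \]
and controlling each term by parallel direct-sum arguments on $I(\vec M; X \mid U) \le s$ and $I(\vec M; U \mid X) \le s$. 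Combined with Step~1, this produces $I(M_{\cP_J}; X \mid Z, E=1) \le \tfrac{2s}{\ell}$.

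The main obstacle lies in the translation step: once the variables $P_{\cP_{\le J}}$ — which are deterministic functions of \emph{all} of $\vec M$, not merely $M_{\cP_{\le J}}$ — are added to the conditioning, the clean conditional independence $I(X; P_{\cP_{\le J}} \mid M_{\cP_{\le J}}, U, \cP, J) = 0$ required to invoke \pref{fact:chainrule2} fails, because conditioning on $U$ already couples $X$ with the remaining $\vec M$-coordinates through $U = U(\vec M, X)$. Navigating this requires the delicate chain-rule bookkeeping developed in \cite{ko_adaptive_2020}: rather than dropping the $P$'s wholesale, one applies \pref{fact:chainrule2} to carefully chosen decompositions of the telescoped sum so that the excess contributions stay within the $O(s/\ell)$ budget on average over $\cP, J$.
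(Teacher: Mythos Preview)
Your high-level plan is essentially the paper's, but the ``obstacle'' you flag at the end is not real, and the detour through $I(M_{\cP_J};X\mid Z_{Alice},U)$ is unnecessary. The paper's argument is in fact simpler than you anticipate.

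First, the paper does not split $I(M_{\cP_J};UX\mid Z_{Alice})$ as $I(M_{\cP_J};U\mid Z_{Alice})+I(M_{\cP_J};X\mid Z_{Alice},U)$; it uses the \emph{other} chain-rule order,
\[
I(M_{\cP_J};UX\mid Z_{Alice}) \;=\; I(M_{\cP_J};X\mid Z_{Alice}) + I(M_{\cP_J};U\mid Z_{Alice},X),
\]
and the first summand is exactly zero because $M_{\cP_J}$ and $Z_{Alice}$ are functions of $(\vec{M},\cP,J)$, which is independent of $X$. So only $I(M_{\cP_J};U\mid Z_{Alice},X)$ needs to be bounded. Your two-term inequality is valid but introduces an extra term that then causes you trouble.

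Second, and more importantly, the direct-sum on $I(M_{\cP_J};U\mid Z_{Alice},X)$ does not require the conditional independence $I(X;P_{\cP_{\le J}}\mid M_{\cP_{\le J}},U,\cP,J)=0$ that you worry about. The move is simply to absorb $P_{\cP_J}$ from the conditioning into the chained variable via $I(A;B\mid CD)\le I(A D;B\mid C)$:
\[
I(M_{\cP_J};U\mid Z_{Alice},X)\;\le\; I\bigl(P_{\cP_J},M_{\cP_J}\,;\,U\;\bigm|\;\cP,J,\,P_{\cP_{<J}},\,M_{\cP_{<J}},\,X\bigr),
\]
and now the right-hand side telescopes cleanly over $J$ to $\frac{1}{\ell}\,I(P_\cP,M_\cP;U\mid\cP,X)\le \frac{H(U)}{\ell}\le \frac{s}{\ell}$. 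No independence of $P$ from anything is invoked; the $P$'s ride along with the $M$'s in the telescoped sum. This is exactly what the paper does, and it is the key step you are missing. Your description of the resolution as ``delicate chain-rule bookkeeping'' on ``carefully chosen decompositions of the telescoped sum'' is too vague and does not name this move.

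Your handling of $E$ in Step~1 is correct (the paper does the analogous thing per fixed $z_{Alice}$, yielding the same factor), and your removal of $V=(B_{\cP_J},Z_{\cP_J})$ from the conditioning matches the paper's two-step removal using $H(B_{\cP_J}\mid Z_{Alice},U)=0$ and the independence of the output coin.
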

\begin{proof}
    We use an analogous technique from \cite{ko_adaptive_2020}. First, note that
    \begin{equation*}
        I ( M_{\cP_J} ; X | Z, E = 1) \leq I ( M_{\cP_J}  ; U X | Z, E = 1) 
    \end{equation*}
    Now we plug in the definition of $Z$ as the concatenation of $Z_{Alice}, Z_{\cP_j} B_{\cP_j}$ along with $E = 1$. Then, for any fixed $Z_{Alice} = z_{Alice}$, we get
    \begin{align*}
        I ( M_{\cP_J}  ; U X  | Z_{Alice} = z_{Alice} , E = 1, Z_{\rho_j} B_{\rho_j} ) \leq I ( M_{\cP_J}  ; U X  | Z_{Alice} = z_{Alice} , E = 1, B_{\rho_j} )
    \end{align*}
    As $I ( Z_{\rho_j} ; U X |  B_{\rho_j}, Z_{Alice} = z_{Alice} , E = 1, M_{\cP_J} ) = 0$. If $B_{\rho_j} = 1$, $Z_{\rho_j}$ is completely determined from $z_{Alice}$. Otherwise it is an independent random variable. Then
    \begin{align*}
        & I ( M_{\cP_J}  ; U X  | Z_{Alice} = z_{Alice} , E = 1, B_{\rho_j}) \leq I ( M_{\cP_J}  ; U X | Z_{Alice} = z_{Alice}, E = 1 ) \\
        & \leq \frac{1}{\Pr[E= 1| Z_{Alice} = z_{Alice} ] } I ( M_{\cP_J}  ; U X | Z_{Alice} = z_{Alice} , E ) \leq 2 \cdot I ( M_{\cP_J}  ; U X | Z_{Alice} = z_{Alice}, E ) \\
        & \leq 2 \cdot I ( M_{\cP_J}  ; U X | Z_{Alice} = z_{Alice} )
    \end{align*}
    where the inequalities hold by \pref{fact:chainrule2} along with
    \begin{equation*}
         I ( M_{\cP_J}  ; B_{\rho_j} | Z_{Alice} = z_{Alice} , E = 1, U X ) \leq H ( B_{\rho_j} | Z_{Alice} = z_{Alice} , E = 1, U X ) = 0
    \end{equation*}
    and the properties of $E$ guaranteed by \pref{lem:l2S}, namely $\Pr[E= 1|Z_{Alice} = z_{Alice} ] \geq 3/4$ and 
    \begin{equation*}
        I ( E  ; U X | Z_{Alice} = z_{Alice}, M_{\cP_J} ) \leq H ( E | Z_{Alice} = z_{Alice}, M_{\cP_J} ) = 0.
    \end{equation*}

    Next, we bound $I ( M_{\cP_J}  ; UT | Z_{Alice} )$ term.  Note that $I ( M_{\cP_J} ; X | Z_{Alice} ) = 0$ as
    \begin{align*}
        I ( M_{\cP_J} ; X | Z_{Alice} ) \leq I ( M_{\cP_J} Z_{Alice} ; X ) \leq I ( \vec{M} ; X ) = 0
    \end{align*}
    Next, we consider $I ( M_{\cP_J}  ; U | Z_{Alice}, X)$ then plugging in the definition of $Z_{Alice}$, 
    \begin{align*}
        & I ( M_{\cP_J}  ; U | Z_{Alice}, X ) = I ( M_{\cP_J}  ; U | \cP, J, P_{\cP_J} , P_{ \cP_{<J} }, M_{\cP_{<J}}, X ) \\
        & \leq I ( P_{ \cP_{J} }, M_{\cP_J}  ; U | \cP, J, P_{ \cP_{<J} }, M_{\cP_{<J}}, X). 
    \end{align*}
    Then taking the expectation over the random coordinate $J$ (by standard direct-sum technique), as all other variables are chosen independently of $J$, we get
    \begin{align*}
        & I ( P_{ \cP_{J} }, M_{\cP_J}  ; U | \cP, J,  P_{ \cP_{<J} }, M_{\cP_{<J}}, X ) \\
        & = \frac{1}{\ell} \sum_{j =1}^{\ell} I ( P_{ \cP_{j} }, M_{\cP_J}  ; U | \cP, J = j, P_{ \cP_{< j} }, M_{\cP_{<J}}, T) \\
        & = \frac{1}{\ell} \sum_{j =1}^{\ell} I ( P_{ \cP_{j} }, M_{\cP_J}  ; U | \cP, P_{ \cP_{< j} }, M_{\cP_{<J}}, T) \\
        & = \frac{I ( P_{ \cP }, M_{\cP} ; U | \cP, T)}{\ell} \leq \frac{H(U)}{\ell} \leq \frac{s}{\ell}.
    \end{align*}
    This completes the proof of the lemma.
\end{proof}

\subsubsection{Good Advantage} \label{sec:advantage}

We also have the following simple observation on the advantage of \pref{prot:AliceBob_adaptive} over a random guessing.

\begin{lemma}[Advantage]
Fix any $\vec{M}$. For any $i \in [m]$, the probability of \pref{prot:AliceBob_adaptive} outputting $\chi_{M_i} (X)$ correctly over $X$ is at least $\frac{1 +  2^{-t}}{2}$
\end{lemma}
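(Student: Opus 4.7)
The plan is to reduce the claim to a simple pigeonhole / counting argument on the paths of Alice's decision tree, combined with the observation that Bob's check ``$P_i$ agrees with $U$'' is equivalent to $P_i$ being the \emph{actual} path the data structure would take on the given $X$.

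First I would fix $\vec{M}$ and $i$, and think of $P_i$ as a random variable over the choice of $X$ (equivalently, over $U = U(\vec M, X)$). Because the decision tree has depth $t$, it has at most $2^t$ distinct root-to-leaf paths, so $P_i$ takes at most $2^t$ values. Alice deterministically picks $p_i^* = \argmax_{p} \Pr_X[P_i = p \mid \vec M = \vec m]$, so by pigeonhole,
\begin{equation*}
    \Pr_X[P_i = p_i^* \mid \vec M = \vec m] \;\geq\; 2^{-t}.
\end{equation*}
The next step is to argue that the event $\{B_i = 1\}$ (Bob's check succeeds) coincides exactly with the event $\{P_i = p_i^*\}$. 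Since the probes are adaptive, the addresses and responses in any fixed path $p_i^*$ are self-consistent: if $U$ agrees with $p_i^*$ on the first probed address, then the second address prescribed by $p_i^*$ is the address the decision tree would actually probe next, and so on inductively. Hence $p_i^*$ matches $U$ \emph{iff} the tree run on $(U,\vec M,i)$ traverses exactly $p_i^*$. In that case, by the correctness of the underlying data structure, the answer stored at the leaf of $p_i^*$ equals $\chi_{M_i}(X)$.

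Finally I would assemble the probabilities. Conditioned on $B_i = 1$, Bob outputs the correct value with probability $1$; conditioned on $B_i = 0$, Bob outputs a uniformly random sign, correct with probability exactly $1/2$. Writing $q := \Pr[B_i = 1 \mid \vec M = \vec m] \geq 2^{-t}$,
\begin{equation*}
    \Pr[\text{Bob correct} \mid \vec M = \vec m]
    \;=\; q \cdot 1 + (1-q) \cdot \tfrac{1}{2}
    \;=\; \tfrac{1}{2} + \tfrac{q}{2}
    \;\geq\; \tfrac{1 + 2^{-t}}{2},
\end{equation*}
which is the claimed advantage.

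There is no real technical obstacle here; the entire proof is a pigeonhole count over the $2^t$ leaves plus a consistency argument. The only subtle point is the equivalence $\{B_i=1\}\Leftrightarrow\{P_i=p_i^*\}$, which relies on the fact that adaptive probe addresses are a deterministic function of the previously seen responses, so matching the responses against $U$ is enough to certify that the \emph{same} addresses would have been probed.
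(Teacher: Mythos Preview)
Your proposal is correct and follows essentially the same approach as the paper's proof: both use the pigeonhole bound $\Pr_X[P_i = p_i^*] \ge 2^{-t}$ over the at most $2^t$ paths, identify $\{B_i=1\}$ with the event that Alice's chosen path coincides with the true path, and then compute the success probability as $q + (1-q)/2$. Your explicit consistency argument for why ``$p_i^*$ agrees with $U$'' is equivalent to ``the tree actually traverses $p_i^*$'' is a welcome clarification of a step the paper leaves implicit, but the overall structure is identical.
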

\begin{proof}

    If we fix $\vec{M}$, first observe that $X$ is independently at random. As $\vec{M}$ completely determines Alice's message, $P_i$, the only remaining part of the protocol is $B_i$ and $Z_i$. There are two possible scenarios for $B_i$. If $U$ agrees with $P_i$, or not, that is if $B_i = 1$ or not. If $B_i = 1$, this implies that the last bit in $P_i$ is indeed $\chi_{M_i} ( X )$ due to the correctness of the original $t+1$-round Communication Process~\pref{prot:multiphase}. If $B_i = 0$, the process takes a random guess. Therefore, the probability of being correct can be written exactly as 
    \begin{equation*}
        \Pr_{X} [ B_i = 1 ] + \Pr_{X} [ B_i = 0] \cdot \frac{1}{2} 
         = \Pr_{X} [ B_i = 1 ] + ( 1 - \Pr_{X} [ B_i = 1 ] ) \cdot \frac{1}{2} = \frac{1 +  \Pr_{X} [ B_i = 1 ]}{2}
    \end{equation*}
    Therefore it suffices to bound $\Pr_{X} [ B_i = 1 ]$. Observe that we choose $P_i$ as the maximum likelihood path. 
    Therefore $\Pr_{X} [ P_i ] \geq \frac{1}{2^t}$, as there are at most $2^t$ many possible $P_i$. 
    Denote $P'_i$ as the true decision tree path given by fixed $\vec{M}$ and $T$. Then $\Pr_{X} [ B_i = 1 ] = \Pr_{X} [ P_i = P'_i ] = \Pr_{X} [ P_i ]$, that is the probability of path $P_i$ over possible $T$'s. Then, we get 
    \begin{equation*}
        \Pr_{X} [ B_i = 1 ] \geq \frac{1}{2^t}
    \end{equation*}
    Therefore, the probability of outputting $\chi_{M_i} (X)$ correctly is at least
    \begin{equation*}
        \frac{1 +  \Pr_{X} [ B_i = 1 ]}{2} \geq \frac{1 + 2^{-t}}{2}
    \end{equation*}
\end{proof}

This immediately implies the following corollary, which we will use towards our final contradiction.
\begin{corollary} \label{cor:advantage}
For any setting of $Z_{Alice} = z_{Alice} , M_{\rho_j}$, and $E = 1$ (which are all completely determined by $\vec{M}, \cP, J$)
    \begin{equation*}
        \Pr[ Z_{\rho_j} = \chi_{M_{\rho_j}} (X) | M_{\rho_j}, Z_{Alice} = z_{Alice}, E = 1 ] \geq \frac{1 + 2^{-t}}{2} 
    \end{equation*}
\end{corollary}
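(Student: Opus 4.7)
The plan is to reduce the corollary directly to the preceding Lemma by exploiting independence from $X$. The key structural observation is that all quantities in the conditioning, namely $Z_{Alice}$, $M_{\rho_j}$ (a coordinate of $\vec M$ determined by $\cP$ and $J$), and the event $E = 1$, are measurable functions of Alice's input $\vec M$ together with the independent auxiliary randomness $(\cP, J)$. Crucially, none of these depend on Bob's input $X$, which is uniform and independent of $\vec M, \cP, J$.

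First I would refine the conditioning down to a complete specification of $(\vec M, \cP, J)$. For any triple $(\vec m, \rho, j)$ consistent with the event $\{Z_{Alice} = z_{Alice},\, M_{\rho_j} = m_{\rho_j},\, E = 1\}$, the preceding Lemma, applied with the operator vector $\vec m$ and query index $i = \rho_j \in [m]$, yields
\begin{equation*}
\Pr_X\!\bigl[\, Z_{\rho_j} = \chi_{M_{\rho_j}}(X) \,\bigm|\, \vec M = \vec m,\ \cP = \rho,\ J = j \,\bigr] \;\geq\; \frac{1 + 2^{-t}}{2},
\end{equation*}
because under this finer conditioning the distribution of $X$ is still uniform, and $Z_{\rho_j}$ is produced by \pref{prot:AliceBob_adaptive} exactly as in the setting of the preceding Lemma.

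Second, I would average over the conditional distribution of $(\vec M, \cP, J)$ given the event $\{Z_{Alice} = z_{Alice},\, M_{\rho_j} = m_{\rho_j},\, E = 1\}$, and invoke the tower rule to conclude
\begin{equation*}
\Pr\!\bigl[\, Z_{\rho_j} = \chi_{M_{\rho_j}}(X) \,\bigm|\, M_{\rho_j},\, Z_{Alice} = z_{Alice},\, E = 1 \,\bigr] \;\geq\; \frac{1 + 2^{-t}}{2},
\end{equation*}
as required. I do not anticipate any real obstacle; the only point requiring care is verifying that the conditioning event is a function of $(\vec M, \cP, J)$ alone (hence independent of $X$), so that the pointwise bound from the preceding Lemma survives averaging without any loss, and that the refinement from the corollary's conditioning to a full fixing of $(\vec M, \cP, J)$ covers precisely the event whose probability we wish to bound.
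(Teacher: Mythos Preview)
Your proposal is correct and matches the paper's approach: the paper simply states that the corollary ``immediately'' follows from the Advantage Lemma, and your argument spells out precisely why---namely, that the conditioning event is measurable with respect to $(\vec M, \cP, J)$ alone, so conditioning on it leaves $X$ uniform and the pointwise bound from the Advantage Lemma passes through the averaging unchanged.
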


\subsection{Combinatorial Lemma} \label{sec:combinatorial}

The main corresponding combinatorial lemma to the \cite{ko_adaptive_2020, dvorak_lower_2020} is the following. This shows that no random process that achieves even the slightest advantage over random guessing with low correlation and large average min-entropy can exist. 

\begin{lemma} \label{lem:comb}
    No setting of a random process $Z = z$ and $C$, which contains $z_{out}$ can simultaneously satisfy all three of the following inequalities for any $\gamma \geq 3$
    \begin{align}
    & \E_{C | Z = z } \left[ \abs{  \E_{M_i, X |_{C = c, Z = z}} \left[ \chi_{M_i} (X) \cdot z_{out} | C = c , Z = z \right] } \right] \geq 2^{- \gamma + 2} \label{eq:condition1} \\
    & \tilde{H}_\infty ( M_i | C, Z = z ) + \tilde{H}_\infty ( X | C, Z = z ) \geq n + 2 \gamma  \label{eq:condition2} \\
    & I ( M_i; X | C, Z = z ) \leq 2^{-2 \gamma} \label{eq:condition3}
    \end{align}
\end{lemma}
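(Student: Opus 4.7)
The plan is to derive a contradiction by upper bounding the advantage quantity on the left-hand side of \pref{eq:condition1} using \pref{eq:condition2} and \pref{eq:condition3}, via Lindsey's Lemma and Pinsker's inequality.

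First, observe that since $C$ contains $z_{out}$, for any fixed $C=c$ the value $z_{out} \in \{\pm 1\}$ is a constant, and hence $\abs{\E[\chi_{M_i}(X) \cdot z_{out} \mid C=c, Z=z]} = \abs{\E[\chi_{M_i}(X) \mid C=c, Z=z]}$. Define for each $c$ the marginal distributions $P_c$ of $M_i$ and $Q_c$ of $X$, and the true joint distribution $R_c$ of $(M_i, X)$, each conditioned on $C=c, Z=z$. Since $\chi_{M_i}(X) = (-1)^{\ip{M_i, X}}$ is precisely a $\pm 1$ entry of the $2^n \times 2^n$ Hadamard matrix $H$, I would write
\begin{equation*}
\E[\chi_{M_i}(X) \mid C=c, Z=z] = P_c^T H Q_c + \sum_{m,x} \bigl(R_c(m,x) - P_c(m) Q_c(x)\bigr) (-1)^{\ip{m,x}}.
\end{equation*}

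Next I would bound the two terms separately, then take expectation over $C$. For the product-distribution term, Lindsey's Lemma~(\pref{cl:hadamard}) gives $\abs{P_c^T H Q_c} \leq \norm{P_c}_2 \norm{Q_c}_2 \cdot 2^{n/2}$. Taking expectation over $C$ and applying Cauchy--Schwarz, this becomes at most $2^{n/2} \sqrt{\E_C[\norm{P_c}_2^2] \cdot \E_C[\norm{Q_c}_2^2]}$. By \pref{prop:l-2-min}, each of these squared-$\ell_2$ expectations is bounded by $2^{-\tilde{H}_\infty(M_i \mid C, Z=z)}$ and $2^{-\tilde{H}_\infty(X \mid C, Z=z)}$ respectively; using \pref{eq:condition2} we get $2^{n/2} \cdot 2^{-(n+2\gamma)/2} = 2^{-\gamma}$. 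For the correction term, I would bound its absolute value (for each $c$) by $\norm{R_c - P_c \otimes Q_c}_1 = 2\norm{R_c - P_c \otimes Q_c}_{TV}$, apply Pinsker (\pref{fact:pinsker}) pointwise, then Jensen to pull the square root outside the expectation over $C$:
\begin{equation*}
\E_C\bigl[\norm{R_c - P_c \otimes Q_c}_{TV}\bigr] \leq \sqrt{\tfrac{1}{2\log e} \cdot \E_C\bigl[D(R_c \,\|\, P_c \otimes Q_c)\bigr]} = \sqrt{\tfrac{1}{2\log e} \cdot I(M_i; X \mid C, Z=z)},
\end{equation*}
using \pref{fact:KL_Mutual}. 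Then \pref{eq:condition3} gives this is at most $2^{-\gamma}/\sqrt{2\log e} \leq 2^{-\gamma}$, so the correction contributes at most $2 \cdot 2^{-\gamma}$.

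Combining, $\E_{C\mid Z=z}\bigl[\abs{\E[\chi_{M_i}(X) \cdot z_{out} \mid C=c, Z=z]}\bigr] \leq 2^{-\gamma} + 2 \cdot 2^{-\gamma} = 3 \cdot 2^{-\gamma} < 2^{-\gamma+2}$, contradicting \pref{eq:condition1}. The main subtlety I anticipate is making sure the averaging over $C$ is performed correctly at each step: Cauchy--Schwarz for the Hadamard term (so $\E[\norm{P_c}_2 \norm{Q_c}_2]$ is controlled by the product of expectations of squared norms which is where the average min-entropy shows up), and Jensen for the Pinsker bound (since Pinsker is pointwise while condition~\pref{eq:condition3} is an expectation). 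Everything else is routine application of the already-established facts.
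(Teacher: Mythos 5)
Your proof is correct and follows essentially the same route as the paper: decompose the advantage into a product-measure Hadamard term (bounded via Lindsey's Lemma, Cauchy--Schwarz over $C$, and the average min-entropy bound of \pref{prop:l-2-min}) plus an $\ell_1$ correction term (bounded via Pinsker, Jensen, and the mutual-information hypothesis), yielding $3 \cdot 2^{-\gamma} < 2^{-\gamma+2}$. The only difference is cosmetic---you write out the additive decomposition and the $\norm{\cdot}_{TV}$-to-KL chain more explicitly and track the $1/\sqrt{2\log e}$ constant that the paper drops---but the argument and constants are the same.
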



\begin{proof}
    For the sake of contradiction, suppose such $z$ and $C$ exists. First as $z_{out}$ is $\pm 1$, we can write 
    \begin{equation}
        \abs{  \E_{M_i, X |_{C = c, Z = z}} \left[ \chi_{M_i} (X) \cdot z_{out} | C = c , Z = z \right] } =  \abs{ \E_{ M_i, X |_{C = c, Z = z}} \left[ \chi_{M_i} (X) | C = c , Z = z \right] }
    \end{equation}
    Then we can use the $\ell_1$ bound to have
    \begin{align}
        \abs{ \E_{ M_i, X |_{C = c, Z = z}} \left[ \chi_{M_i} (X) | C = c , Z = z \right] } & \leq \abs{ M_i |_{C = c, Z = z} \cdot H \cdot X |_{C = c, Z = z} } \label{eq:invariant_hadamard} \\
        & +  \norm{ M_i |_{C = c, Z = z} \times X |_{C = c, Z = z} - (M_i,X) |_{C = c, Z = z} }_1  \label{eq:invariant_rectangle} 
    \end{align}

    We bound the expectation of \pref{eq:invariant_rectangle}. Our KL-divergence term is then equal to the mutual information between $M_i$ and $X$ conditioned on $Z = z$. Namely, using the chain rule for the KL divergence,
    \begin{align*}
        D ( M_i, X |_{Z = z}  || M_i |_{Z = z} \times X |_{ Z = z } ) & = \underbrace{ D( X |_{Z = z} || X |_{Z = z} )}_{ = 0} + \E_{x \sim X |_{Z = z}} \left[ D ( M_i|_{X = x, Z = z}  || M_i |_{Z = z} )  \right] \\
        & = I ( M_i; X | Z = z ) 
    \end{align*} 
    Then, due to Pinsker's inequality (\pref{fact:pinsker}), we have
    \begin{align*}
        & \norm{ M_i |_{C = c, Z = z} \times X |_{C = c, Z = z} - (M_i,X) |_{C = c, Z = z} }_1 \leq 2 \sqrt{ I ( M_i ; X | C = c, Z = z ) }
    \end{align*}
    Then taking expectation over $C$ and applying Jensen's inequality,
    \begin{equation} \label{eq:invariant_rectangle_expectation}  
        \E_{C |_{Z = z}} \left[ \norm{ M_i |_{C = c, Z = z} \times X |_{C = c, Z = z} - (M_i,X) |_{C = c, Z = z} }_1 \right] \leq 2 \sqrt{ I ( M_i ; X | C, Z = z ) } \leq 2^{-\gamma + 1} 
    \end{equation}
    where the last bound holds from \pref{eq:condition3}.

    Next, we bound \pref{eq:invariant_hadamard}. Due to \pref{cl:hadamard} and Cauchy-Schwarz Inequality,
    \begin{align*}
        & \E_{C |_{Z = z}} \left[ \abs{ M_i |_{C = c, Z = z} \cdot H \cdot X |_{C = c, Z = z} } \right] \leq \E_{C |_{Z = z}} \left[ 2^{n/2} \cdot \norm{ M_i |_{C = c, Z = z} }_2 \cdot \norm{X |_{C = c, Z = z} }_2 \right] \\
        & \leq 2^{n/2} \cdot \sqrt { \E_{C |_{Z = z}} \left[ \norm{ M_i |_{C = c, Z = z} }_2^2 \right] \cdot \E_{C |_{Z = z}} \left[ \norm{ X |_{C = c, Z = z} }_2^2 \right] }
    \end{align*}
    \pref{prop:l-2-min} implies
    \begin{align*}
        & \E_{C |_{Z = z}} \left[ \norm{ M_i |_{C = c, Z = z} }_2^2 \right] \leq 2^{ - \tilde{H}_\infty ( M_i | C, Z =z ) } \\
        & \E_{C |_{Z = z}} \left[ \norm{ X |_{C = c, Z = z} }_2^2 \right] \leq 2^{ - \tilde{H}_\infty ( X | C, Z =z ) }
    \end{align*}
    which would in turn imply 
    \begin{align*}
        &  \sqrt { \E_{C |_{Z = z}} \left[ \norm{ M_i |_{C = c, Z = z} }_2^2 \right] \cdot \E_{C |_{Z = z}} \left[ \norm{ X |_{C = c, Z = z} }_2^2 \right] } \leq 2^{ - ( \tilde{H}_\infty ( M_i | C, Z =z ) + \tilde{H}_\infty ( X | C, Z =z ) ) / 2 } \\
        & \leq 2^{- \frac{n + 2 \gamma}{2}} = 2^{-n/2} \cdot 2^{ -\gamma}
    \end{align*}
    which then implies \pref{eq:invariant_hadamard} is upper bounded by
    \begin{equation}
        \pref{eq:invariant_hadamard} \leq 2^{n/2} \cdot 2^{-n/2} \cdot 2^{-\gamma} = 2^{- \gamma}
    \end{equation}
    
    Therefore, we get
    \begin{align}
    \E_{C | Z = z } \left[ \abs{  \E_{M_i, X |_{C = c, Z = z}} \left[ \chi_{M_i} ( X ) \cdot z_{out} | C = c , Z = z \right] } \right] \leq \frac{3}{2^{\gamma}} 
    \end{align}
    which contradicts \pref{eq:condition1}.
\end{proof}

\subsection{Combining the lemmas} 
Finally, we combine the lemmas from the previous two sections to prove the main theorem via contradiction. Recall the statement of our main theorem.
\begin{reptheorem}{thm:main} [Main]
   Let $t,s,m \geq 100$ be parameters such that $m = |\cQ| \geq \omega ( s \cdot 2^{3 t} ), \log |\cS| \geq 40 \cdot  t \log s$. Consider any data structure which answers $m$ many linear functions from a collection $\cS \subseteq \Field_2^n$ say $\cQ$. There must exist some $\cQ$ such that there is no data structure for $\cQ$ using $s$-space, $t$ probes per query. 
\end{reptheorem}

\begin{proof}
    We will prove via contradiction. Suppose otherwise. Then
    we know that \pref{prot:AliceBob_adaptive} must exist as well with the provided definitions of $Z_{Alice}, E$, and $Z$. Our goal is to show that there exists a setting of $Z_{Alice} = z_{Alice}, E = 1$ and $Z_{\rho_j} B_{\rho_j}$ that would violate \pref{lem:comb}, thus leading to a contradiction.
    
    First, we will fix $Z_{Alice} = z_{Alice}, E = 1$ which satisfies large average Min-Entropy and low correlation simultaneously. That is, we will fix $Z_{Alice} = z_{Alice}, E = 1$ such that
    \begin{align}
        & I ( M_{\cP_J} ; X | Z_{Alice} = z_{Alice} , E = 1, Z_{\rho_j} B_{\rho_j} ) \leq \frac{6 s}{\ell} \label{eq:9} \\
        & \tilde{H}_\infty ( M_{\rho_j} | Z_{\rho_j} B_{\rho_j}, E = 1, Z_{Alice} = z_{Alice} )  + \tilde{H}_\infty ( X | Z_{\rho_j} B_{\rho_j}, E = 1, Z_{Alice} = z_{Alice} ) \nonumber  \\
        & \geq n + \log |\cS| - 300 t \log s \label{eq:10}
    \end{align}
    
    Over the random choice of $Z_{Alice} = z_{Alice}$, \pref{lem:S} and \pref{lem:l2S} implies that 
    \begin{equation*}
        \tilde{H}_\infty ( M_{\rho_j} | Z_{\rho_j} B_{\rho_j}, E = 1, Z_{Alice} = z_{Alice} )  + \tilde{H}_\infty ( X | Z_{\rho_j} B_{\rho_j}, E = 1, Z_{Alice} = z_{Alice} ) \geq n + \log |\cS| - 30 t \log s
    \end{equation*}
    \pref{lem:corr} implies that over the random choice of $Z_{Alice} = z_{Alice}$, 
    \begin{equation*}
        I ( M_{\cP_J} ; X | Z_{Alice} = z_{Alice} , E = 1,  Z_{\rho_j} B_{\rho_j} ) \leq \frac{2 s}{\ell} 
    \end{equation*}
    Due to Markov argument, there exists $Z_{Alice} = z_{Alice}$ which simultaneously satisfy both \pref{eq:9} and \pref{eq:10}. 

    On the other hand, \pref{cor:advantage} implies that for any setting of $Z_{Alice} = z_{Alice}, E = 1$ and $S_{\rho_j}$ 
    \begin{equation*}
        \Pr[ Z_{\rho_j} = \chi_{M_{\rho_j}} (X) | M_{\rho_j}, Z_{Alice} = z_{Alice}, E = 1 ] \geq \frac{1 + 2^{-t}}{2} 
    \end{equation*}
    or equivalently
    \begin{equation} \label{eq:11}
        \E_{ M_{\rho_j}, X |_{Z_{\rho_j} B_{\rho_j}, Z_{Alice} = z_{Alice}, E = 1}} \left[ \abs{ \chi_{M_{\rho_j}} (X) \cdot Z_{\rho_j} } \right] \geq 2^{-t}
    \end{equation}

    Then consider \pref{eq:9}, \pref{eq:10} and \pref{eq:11}. To obtain a desired contradiction with \pref{lem:comb}, setting $\gamma = t + 2$ in the lemma, it suffices to have
    \begin{align*}
        & n + \log |\cS| - 30 t \log s \geq n + 2 ( t + 2 ) \\
        & \frac{ 6 s }{\ell} \leq 2^{- 2 ( t + 2)}
    \end{align*}
    which is implied by
    \begin{align*}
        & 40 t \log s  \leq \log |\cS| \\
        & 6 s 2^{ 2 ( t + 2 )} \leq s \cdot 2^{3t} \leq \ell = |\cQ| / 100
    \end{align*}
    These conditions are implied by the choice of parameters given in the statement of the theorem. This would contradict \pref{lem:comb}, completing the proof of the theorem.
\end{proof}
\section{Wire Lower Bound for Circuits with Arbitrary Gates} \label{sec:viola}

In this section, we show that a random linear operator satisfies one of the conditions laid out by \cite{viola_lower_2018} to obtain a breakthrough for lower bounds in circuits with arbitrary gates. This shows that a random linear operator does beat the state-of-the-art lower bounds given in \cite{cherukhin_lower_2008,hirsch_lower_2008, gal_tight_2013}.

\subsection{Circuit with Arbitrary Gates}

In this section, we formally define a circuit with arbitrary gates. (See Chapter 13 of \cite{jukna_boolean_2012} and \cite{drucker_limitations_2012} for further references) 
We would like to compute a Boolean operator $f : \{ 0,1 \}^n \to \{ 0 , 1 \}^m$ using a circuit where a gate can compute any function with unbounded fan-in. Note that this is the strongest possible model of circuit as computations are given for free. Therefore, its lower bound should apply to all possible models of circuit. Furthermore, it is meaningless to count the number of gates, as any $f$ can be computed with $m$ gates. What we measure instead is information transfer, quantified by the number of wires. Then a trivial upper-bound is $n \cdot m$, while a trivial lower bound is $\max \{ n, m \}$. Assuming $m > n$, a non-trivial lower bound on the number of wires would be of the form $\omega( m )$. 

In this unrealistically powerful model, we want to study the trade-off between the number of wires required for $f$ as a function of the depth of the circuit $d$, output size $m$ and input size $n$. Here we consider the setting where $m = O(n)$.

\subsubsection{Previous Results}
In order to describe the previous results, we need the following definition.
\begin{definition}
    $\lambda_1 ( n ) = \lfloor \sqrt{n} \rfloor$, $\lambda_2 (n) = \lceil \log n \rceil$. For $d \geq 3$, $\lambda_d (n) := \lambda_{d-2}^* (n)$ where $^*$ denotes the number of times the function must be applied to $n$ to reach a value $\leq 1$.
\end{definition}

Due to a simple counting argument, most arbitrary operators require $\tilde{\Omega} ( n^2 )$-wires \cite{jukna_circuits_2010}. However, if we turn to finding an explicit (or even semi-explicit) operator with $\omega( n )$-wires, the best explicit bound known (an improvement over \cite{raz_lower_2003}) is due to Cherukhin’s Bound $\Omega_d ( n \cdot \lambda_{d-1} (n) )$ \cite{hirsch_lower_2008, cherukhin_lower_2008, drucker_limitations_2012}. However, the operator in consideration is not (and cannot be) a linear operator, though explicit. On the other hand, if we turn to finding some hard linear operator, \cite{gal_tight_2013} shows that computing any ``good" linear error correcting code\footnote{most linear operators are ``good" linear error correcting codes due to Gilbert-Varshamov bound holding for random linear operators \cite{guruswami_essential_2022}. And there are explicit constructions of ``good" linear error correcting codes.} requires $\Omega( n \cdot \lambda_{ 2 \cdot \lfloor d/2 \rfloor} (n) )$-many wires. 

\begin{table}[!h]
\centering
\begin{tabular}{|l|l|l|l|}
\cline{1-4}
Depth & \cite{hirsch_lower_2008, cherukhin_lower_2008} & \cite{gal_tight_2013} & Our Result \\ \cline{1-4}
$d=2$   & $\Omega (n \cdot \sqrt{n} )$  & $\Omega (n \cdot ( \log (n) / \log \log (n) )^2 )$ & $\Omega (n \cdot \log^{1/2} (n) )$ \\ 
$d=3$   & $\Omega (n \cdot \log (n) )$  & $\Omega (n \cdot \log \log (n) )$ & $\Omega (n \cdot \log^{1/3} (n) )$ \\
$d=4$   & $\Omega (n \cdot \log\log (n) )$ & $\Omega (n \cdot \log^* (n) )$ & $\Omega (n \cdot \log^{1/4} (n) )$ \\ 
$d$     & $\Omega_d (n \cdot \lambda_{d-1} (n) )$ & $\Omega (n \cdot \lambda_{ 2 \cdot \lfloor d/2 \rfloor } (n) )$ & $\Omega (n \cdot \log^{1/d} (n) )$ \\ \cline{1-4}
\end{tabular}
\caption{Wire lower bounds for Circuits with Arbitrary Gates \label{tab:result}}
\end{table}

The primary technical reason for the rapid decay of wire lower bounds with increasing depth is the reliance on the so-called {\em superconcentrator} technique \cite{valiant_non-linear_1975, pippenger_superconcentrators_1977, dolev_superconcentrators_1983, pudlak_communication_1994, pudlak_combinatorial-algebraic_1994, radhakrishnan_bounds_2000, gal_tight_2013} or more refined {\em Strong Multiscale Entropy} (SME) approach \cite{raz_lower_2003, hirsch_lower_2008, cherukhin_lower_2008, jukna_boolean_2012}. These are the only previously known methods--aside from naive counting--for establishing lower bounds on circuits with arbitrary gates. Moreover, the limitations of these techniques are well-documented. For instance, \cite{dolev_superconcentrators_1983, pudlak_communication_1994, radhakrishnan_bounds_2000, gal_tight_2013} exhibited superconcentrators with small circuits. \cite{drucker_limitations_2012} demonstrated that neither the SME approach nor a generalization of \cite{gal_tight_2013} can yield further improvements.

\subsubsection{Our Result}
We show a wire lower bound which only suffers a polynomial decay per depth $d$ compared to the inverse Ackermann function type bound in previous works at the cost of using a random linear operator. In particular, we show the following theorem.
\begin{theorem} \label{thm:circuitLB}
For most linear operators $M : \{0,1\}^n \to \{0,1\}^{O(n)}$, when computed by a circuit of depth $d$ requires
\begin{equation*}
    w \geq \Omega( n \cdot \log^{1/d} (n) )
\end{equation*}   
\end{theorem}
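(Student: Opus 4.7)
The plan is to combine \pref{thm:main} with a classical graph-theoretic depth-reduction in the style of Valiant \cite{valiant_why_1992, goos_graph-theoretic_1977} (see also \cite{drucker_limitations_2012}), so as to convert any depth-$d$ circuit with arbitrary gates into a static cell-probe data structure for $M$, and then invoke the data-structure lower bound for a random linear operator. I will fix $m = c_d \cdot n$ with a suitable constant $c_d$ so that $M \in \Field_2^{m \times n}$ is a random linear operator of the aspect ratio demanded by the main theorem, and argue by contradiction: suppose a depth-$d$ circuit with wire count $w = o(n \cdot \log^{1/d}(n))$ computes $x \mapsto M x$ for most random $M$.

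The first step is the depth-reduction. Setting $t := c_0 \log^{1/d}(n)$, I pick $d-1$ successive ``cut layers'' through the circuit DAG that evenly balance the wire budget. This produces an intermediate vertex cut $L$ of size $s = O(w / t^{d-1})$ such that every input-to-output path traverses some gate of $L$ and every output gate depends on at most $t$ gates of $L$. The values stored on $L$ are arbitrary nonlinear functions of the input $x$ and constitute a pre-processed memory of size $s$; each output $\langle M_i, x \rangle$ is then computed by probing at most $t$ of these $s$ cells. This is exactly a static data structure with space $s$ and worst-case probe count $t$ for the linear problem $\cQ = \{ \langle M_i, \cdot \rangle \}_{i \in [m]}$.

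The second step applies \pref{thm:main} with $\cS = \Field_2^n$ and the $(s,t,m)$ inherited from the reduction. Choosing $c_d$ large enough ensures the preconditions $m \geq \omega(s \cdot 2^{5t})$ (since $2^{5t} = n^{o(1)}$) and $\log |\cS| = n \geq 10^4 \cdot t \log s$. The theorem then exhibits a random $\cQ \subseteq \cS$ of size $m$ for which no $(s,t)$-data structure exists, contradicting the circuit assumption and yielding $w \geq \Omega(n \cdot \log^{1/d}(n))$.

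The main obstacle is the depth-reduction step together with the careful tracking of worst-case (rather than average) fan-in. Black-box Valiant-style reductions typically pay iterated-log factors that would worsen the final bound to $n \cdot \lambda_d(n)$ as in \cite{gal_tight_2013}; the improvement to $\log^{1/d}(n)$ relies on the fact that \pref{thm:main} provides the stronger trade-off $t \geq \Omega(\log(m/s))$, whose absence of sampling-based degradation lets us absorb the polynomial-in-$d$ loss from balancing $d-1$ successive cuts. A secondary technical point is that we must pass from a bound on the total number of wires to a bound on the worst-case fan-in per output: this is accommodated by restricting to the $\Omega(m)$ outputs whose reduced fan-in is within a constant factor of the average, which still satisfies the hypotheses of \pref{thm:main}.
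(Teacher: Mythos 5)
Your high-level plan --- depth-reduce the circuit to a static data structure, then invoke the data-structure lower bound --- is the right one, and indeed the paper does this (via Viola's Theorem 2.3). But there are two problems with your specific route, and the second one is a real gap.

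\textbf{The depth reduction is not the standard one.} Your claimed trade-off $s = O(w/t^{d-1})$ with worst-case probe count $t$ omits the input: the circuit's wires can run directly from $x$ into gates at any depth, so a cut separating inputs from outputs must either include the $n$ input nodes or count the wires out of them. The form that the paper actually uses (Viola's Theorem 2.3) is space $s = n + r$ and time $t = (w/r)^d$, where the additive $n$ is exactly the stored input. This $+n$ cannot be swept away, and it matters enormously for the next step.

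\textbf{Applying \pref{thm:main} directly cannot give $m = O(n)$.} Once $s \geq n$ (which it will be), the hypothesis $m \geq \omega(s \cdot 2^{5t})$ of \pref{thm:main} with $t = \Theta(\log^{1/d} n) \to \infty$ forces $m = \omega\bigl(n \cdot 2^{\Theta(\log^{1/d} n)}\bigr)$, which is superlinear --- contradicting the theorem's claim about operators $M : \{0,1\}^n \to \{0,1\}^{O(n)}$. This is precisely why the paper does \emph{not} apply \pref{thm:main} as a black box here. Instead it proves \pref{lem:tight_log_bound} by modifying Protocol~\ref{prot:AliceBob_adaptive}: Alice sends her \emph{entire} decision tree (cost $2^t \log s = n^{o(1)}$ bits, affordable since $t = O(\log n)$) rather than a single path. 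This buys a constant advantage of $\geq 1/3$ instead of $2^{-t}$, so the combinatorial lemma can be invoked with $\gamma = O(1)$, and the correlation condition only requires $\ell = m/100 \geq \Theta(s)$, i.e.\ $m = O(n)$ suffices. Your proposal has no mechanism to escape the exponential-in-$t$ advantage loss, so it cannot reach the stated aspect ratio. You would need to re-derive the small-advantage analysis of \pref{lem:tight_log_bound}, not merely cite \pref{thm:main}.
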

For depth-2 circuit, our bound is weaker than that of \cite{gal_tight_2013}. But we get a super-exponential improvement for $d \geq 3$.
We also beat Cherukhin’s Bound \cite{hirsch_lower_2008, cherukhin_lower_2008} for $d \geq 4$. This also implies a superlinear wire lower bound as long as $d = o ( \log \log n)$. The result is summarized in \pref{tab:result}.

We remark that such lower bound is only possible with an approach that is drastically different from superconcentrator or SME due to known limitations of these techniques \cite{dolev_superconcentrators_1983, pudlak_communication_1994, radhakrishnan_bounds_2000, drucker_limitations_2012}. 

\paragraph{Separation between Representing a Linear Operator} This also gives an exponentially stronger separation between ``representing" a linear operator \cite{jukna_representing_2010, drucker_limitations_2012, jukna_boolean_2012} and ``computing" a linear operator under circuits with arbitrary gates. A circuit $C$ {\em represents} a linear operator $f$ if there exists some basis $B \subset \Field_2^n$ such that for every $b \in B$, $f(b) = C(b)$. Note that if a circuit is a {\bf linear} circuit, representing and computing are equivalent tasks. But for {\bf non-linear} circuits, they are not necessarily equivalent. Drucker~\cite{drucker_limitations_2012} showed that most linear operators can be represented by a circuit with $O(n)$ wires in depth-$3$, while \cite{gal_tight_2013} shows $\Omega( n \log \log n )$-wire lower bound for computing a linear operator, giving a separation for two tasks for non-linear circuits. Our result implies $\Omega ( n \log^{1/3} n )$ lower bound.

\subsection{Proof of \pref{thm:circuitLB}}

We use the following theorem to translate the cell-probe lower bound to circuit wire lower bound.
\begin{theorem}[Theorem 2.3 of \cite{viola_lower_2018}] \label{thm:viola}
Suppose the operator $f : \{0,1\}^n \to \{0,1\}^m$ has a circuit of depth $d$ with $w$ wires, consisting
of unbounded fan-in, arbitrary gates. Then $f$ has a data structure with space $s = n+r$ and time $(w/r)^d$ for any $r$.
\end{theorem}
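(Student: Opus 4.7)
The plan is to give a direct constructive reduction: starting from a depth-$d$, $w$-wire circuit $C$ with arbitrary gates computing $f$, I would build a data structure that stores $X$ verbatim together with the pre-computed values of a carefully chosen small set of ``heavy'' internal gates, and answers a query by a recursive evaluation that is truncated at these stored gates.

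First I would fix a fan-in threshold $\tau = w/r$ and classify each internal gate as \emph{heavy} (fan-in strictly greater than $\tau$) or \emph{light} (fan-in at most $\tau$). If $w_k$ denotes the number of wires entering layer $k$, then the number of heavy gates at layer $k$ is at most $w_k/\tau = w_k r/w$; summing over $k = 1, \dots, d$ and using $\sum_k w_k \leq w$, the total number of heavy gates across the entire circuit is at most $r$. The pre-processing phase then writes the input $X \in \Field_2^n$ into $n$ cells and the pre-computed value $g(X)$ of every heavy gate $g$ into one cell each, for a total of exactly $s = n + r$ cells.

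To answer a query $f(X)_i$, I would run a recursive procedure $\mathrm{Eval}(g)$ rooted at the $i$-th output gate: if $g$ is an input-layer gate, probe the corresponding coordinate of $X$; if $g$ is a heavy gate, probe its stored value; otherwise $g$ is light with fan-in $\leq \tau$, and the procedure recurses on each predecessor and combines the results according to $g$'s truth table (which is free in the cell-probe model). Because every recursive call either terminates at a probe or branches into at most $\tau$ subcalls, and the recursion depth is bounded by the circuit depth $d$, the recursion tree has at most $\tau^d = (w/r)^d$ leaves, each corresponding to one probe. This gives query time $t = (w/r)^d$ as required.

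There is no real obstacle here; the only step requiring any care is the wire-accounting that yields ``at most $r$ heavy gates total.'' One must count each wire exactly once when charging it to its target gate, so that the per-layer bound $w_k r/w$ indeed telescopes to $r$. The rest of the argument is a straightforward recursion-tree bound, and the construction is fully explicit once $r$ is fixed.
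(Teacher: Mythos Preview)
The paper does not supply its own proof of this theorem; it is quoted verbatim as Theorem~2.3 of \cite{viola_lower_2018} and used as a black box in the derivation of \pref{thm:circuitLB}. Your argument is the standard one (and essentially the argument in \cite{viola_lower_2018}): threshold on fan-in $w/r$, observe that at most $r$ gates can exceed the threshold by a wire-counting average, store those gates' values together with $X$, and bound the query recursion tree by $\tau^d$. There is nothing to compare against in the present paper, and your proof is correct as written.
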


We show the following lemma which follows from modifying the proof of \pref{thm:main}. Note that this is stronger than what is necessary as we show a lower bound against small constant advantage over random guessing.
\begin{lemma} \label{lem:tight_log_bound}
    For random linear operator $M : \{0,1\}^n \to \{0,1\}^m$ with $m = 10^9 \cdot n$, 
    any cell-probe data structure which correctly outputs $f_i ( x )$ with probability at least $2/3$ for all $i \in [m]$ using $s = 1.01 n$-space must have $t \geq \Omega( \log n )$. 
\end{lemma}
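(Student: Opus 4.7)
Theorem~\ref{thm:main} applied directly to the regime $m = 10^9 n$ and $s = 1.01 n$ yields only the trivial $t \geq \Omega(\log(m/s)) = \Omega(1)$, so the plan is to first amplify the data structure by taking the XOR of $k := \lceil\log n\rceil$ queries, thereby creating a much larger query collection to which Theorem~\ref{thm:main}'s guarantee applies at a stronger depth. I would construct the amplified data structure $D'$ whose queries are indexed by $k$-subsets $S \subseteq [m]$: $D'$ invokes the given $D$ on each $i \in S$ and returns the XOR, which computes $\langle \bigoplus_{i\in S} M_i, X \rangle$ using $t' = tk$ probes and $s' = s$ space, for $m' = \binom{m}{k}$ queries. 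For uniformly random $M$, each coefficient $M'_S := \bigoplus_{i\in S} M_i$ is marginally uniform in $\Field_2^n$, and distinct $S$ yield distinct $M'_S$ except on an event of probability $O(\binom{m}{k}^2 / 2^{n}) = o(1)$, since $k \log m = O(\log^2 n) \ll n/2$.

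To track the amplified correctness, I would define $E(X) := \{i : D(X,i) \neq \langle M_i,X\rangle\}$; the hypothesis gives $\E_X[|E(X)|/m] \leq 1/3$ by Fubini. For a random $k$-subset $S$, $D'(S,X)$ is correct iff $|S \cap E(X)|$ is even, which occurs with probability $(1 + (1 - 2|E(X)|/m)^k)/2 \pm o(1)$ by the hypergeometric computation for $k \ll m$. Convexity of $e \mapsto (1-2e/m)^k$ together with Jensen's inequality then gives amplified advantage $\delta \geq (1/3)^k/2 = \Theta(n^{-\log_2 3})$, polynomially small. I would then run the compression-plus-combinatorial-lemma architecture of Theorem~\ref{thm:main} on $D'$, tracking a compressed advantage of $\Omega(\delta \cdot 2^{-t'})$ (obtained by having Alice pick the max-weighted-bias path rather than the max-likelihood path, so that the argument goes through at constant advantage). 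Lemmas~\ref{lem:S}, \ref{lem:l2S}, and \ref{lem:corr} give average min-entropy sum $\geq 2n - O(t'\log s)$ and mutual information $\leq 2s/\ell \leq 200 s/m' = 2^{-\Theta(\log^2 n)}$ since $m' = \exp(\Theta(\log^2 n))$. Lemma~\ref{lem:comb} applied with $\gamma \approx tk + (\log_2 3)\log n + O(1)$ then has the advantage condition satisfied by construction, the MI condition permits $\gamma$ up to $\Theta(\log^2 n)/2$, and the min-entropy condition is non-binding; the contradiction demands $tk + O(\log n) \leq \Theta(\log^2 n)/2$, i.e., $t \leq (\log n)/2 - O(1)$, giving $t \geq \Omega(\log n)$.

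The hard part is twofold. First, the amplified queries $\{M'_S\}_S$ are correlated across overlapping $S$, so the entropy and mutual-information estimates in Lemmas~\ref{lem:S}--\ref{lem:corr} must be re-verified in this correlated setting; the marginal uniformity of each $M'_S$ together with the randomization over the index $J$ in the extraction should be enough, but this needs care, and may require restricting to near-disjoint subsets with a corresponding adjustment to the final bound. Second, the compressed advantage is only $2^{-\Theta(\log^2 n)}$, exponentially small in $\log n$, which is precisely the regime enabled by the paper's main technical contribution of using average min-entropy and Lindsey's Lemma (Claim~\ref{cl:hadamard}) in place of KL-based discrepancy bounds; any direct information-theoretic argument based solely on Pinsker's inequality would lose too much in the compression and fail to close the gap.
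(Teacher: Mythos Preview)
Your amplification approach has a genuine gap that the ``hard part'' paragraph underestimates. The framework behind Theorem~\ref{thm:main} needs \emph{both} a large query collection (so that $s/\ell$ is tiny, permitting large $\gamma$ in Lemma~\ref{lem:comb}) \emph{and} high conditional min-entropy of the active query vector given $Z_{Alice}$, which by design contains $M'_{\cP_{<J}}$. For your XOR-amplified queries these two requirements are incompatible. If you keep $m'=\binom{m}{k}=2^{\Theta(\log^2 n)}$ subsets, then $\ell=m'/100\gg m$ and $Z_{Alice}$ reveals $J{-}1$ linear combinations of only $m=O(n)$ independent rows; since the indicator vectors of any family of more than $m$ subsets have rank at most $m$ over $\Field_2$, for a $1-O(m/\ell)=1-o(1)$ fraction of $J$ the vector $M'_{\cP_J}$ is fully determined by $M'_{\cP_{<J}}$, and the conclusion of Lemma~\ref{lem:S} collapses to zero min-entropy. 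Conversely, making the $M'_S$ genuinely independent requires linearly independent indicator vectors, hence $\ell\le m=O(n)$; then $s/\ell\ge 10^{-9}$, so \pref{eq:condition3} caps $\gamma$ at a constant, whence \pref{eq:condition1} forces $tk=O(1)$ --- no better than applying Theorem~\ref{thm:main} directly. You also cannot drop $M'_{\cP_{<J}}$ from $Z_{Alice}$, because the telescoping chain rule in the proof of Lemma~\ref{lem:corr} needs exactly that conditioning to spread the $s$ bits of $U$ across $\ell$ coordinates.

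The paper takes a completely different and much shorter route: it changes the compression step in \pref{prot:AliceBob_adaptive} rather than the query set. Assuming $t\le 0.1\log n$ for contradiction, the \emph{entire} depth-$t$ decision tree for a single query fits in $2^t\log s=o(n^{0.2})$ bits, so Alice sends the whole tree instead of one guessed path, and Bob simply evaluates it against $U$. This removes the $2^{-t}$ loss altogether --- the compressed advantage is now the data structure's own advantage $\ge 1/3$, and the variable $B_i$ disappears. With $|\cS|=2^n$, $s=1.01n$, $\ell=m/100=10^7 n$, the (trivially modified) Lemmas~\ref{lem:S}--\ref{lem:corr} yield min-entropy sum $\ge 2n-n^{0.2}$ and mutual information $\le 6s/\ell\le 10^{-5}$, contradicting Lemma~\ref{lem:comb} at a \emph{constant} $\gamma$. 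The trade is a slightly longer Alice message (harmless, since $n^{0.2}\ll n$) in exchange for never entering the small-advantage regime at all.
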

Note that \pref{lem:tight_log_bound} and \pref{thm:viola} directly imply \pref{thm:circuitLB}.

\begin{proofsketch}
    Here we sketch and highlight the required modification to the main proof. Suppose there exists a cell-probe data structure for most random linear operator $f : \{0,1\}^n \to \{0,1\}^m$ using $s = 1.01 n$-space with $t \leq 0.1 \log n$. 

    We replace \pref{prot:AliceBob_adaptive}. We redefine $P_i$ to represent the entire decision tree rather than a single path within it. Instead of sending a single path in the decision tree, as $t \leq 0.1 \log n$, we can afford to send the whole decision tree using $2^t \cdot \log s = o ( n^{0.2} )$-bits. Then Bob can compute the whole outcome of the decision tree using $U$. Furthermore, the correctness of the cell-probe data structure guarantees Alice and Bob correctly output the answer with probability $> 2/3$. Therefore, the random variable $B_i$ is unnecessary and we can have $Z_i$ as the output.
    
    With this change, we leave the reader to verify that we can replace \pref{eq:9} and \pref{eq:10} in \pref{thm:main} as 
    \begin{align}
        & I ( M_{\cP_J} ; T | Z_{Alice} = z_{Alice} , E = 1, Z_{\rho_j}  ) \leq \frac{6 s}{\ell} \leq \frac{1}{10^5}  \\
        & \tilde{H}_\infty ( M_{\rho_j} | Z_{\rho_j}, E = 1, Z_{Alice} = z_{Alice} )  + \tilde{H}_\infty ( X | Z_{\rho_j}, E = 1, Z_{Alice} = z_{Alice} ) \nonumber  \\
        & \geq 2 n - n^{0.2} 
    \end{align}
    while $\abs{ Z_{\rho_j} \cdot \chi_{M_{\rho_j}} ( X ) }$ is $\geq 1/3$ in expectation. This then violates \pref{lem:comb} by taking $\gamma$ as the appropriate constant. 
\end{proofsketch}

\begin{remark}
    The simple modification can also be adapted to a carefully manipulated version of the original argument in \cite{ko_adaptive_2020, dvorak_lower_2020}, thereby giving the same $t \geq \Omega( \log n )$ bound when the data structure is guaranteed to output the correct answer all the time. However, as noted from the sketch of the proof, the new argument works against circuits obtaining some $\Omega(1)$ advantage over random guessing. That is our proof works against the circuit $C$'s such that for every $i \in [m]$
    \begin{equation*}
        \Pr_{x} \left[ C_i ( x ) = f_i ( x ) \right] \geq \frac{1 + \Omega(1)}{2}
    \end{equation*}
    where $C_i$ denotes the $i$-th output of the circuit $C$. The argument from \cite{ko_adaptive_2020, dvorak_lower_2020} cannot be used to handle such small advantage regime.
\end{remark}
\section{Acknowledgment}
We thank Sasha Golovnev for his feedback in the earlier draft of the paper.
A part of this work was done while attending DIMACS Workshop on Lower Bounds and Frontiers in Data Structures 2022.

\newpage 
\bibliographystyle{alpha}
\bibliography{references.bib}

\newcommand{\etalchar}[1]{$^{#1}$}
\begin{thebibliography}{BYBJK11}

\bibitem[AFKL19]{afshani_lower_2019}
Peyman Afshani, Casper~Benjamin Freksen, Lior Kamma, and Kasper~Green Larsen.
\newblock Lower {Bounds} for {Multiplication} via {Network} {Coding}.
\newblock In Christel Baier, Ioannis Chatzigiannakis, Paola Flocchini, and Stefano Leonardi, editors, {\em 46th {International} {Colloquium} on {Automata}, {Languages}, and {Programming} ({ICALP} 2019)}, volume 132 of {\em Leibniz {International} {Proceedings} in {Informatics} ({LIPIcs})}, pages 10:1--10:12, Dagstuhl, Germany, 2019. Schloss Dagstuhl–Leibniz-Zentrum fuer Informatik.

\bibitem[Aga17]{loebl_simplex_2017}
Pankaj~K. Agarwal.
\newblock Simplex {Range} {Searching} and {Its} {Variants}: {A} {Review}.
\newblock In Martin Loebl, Jaroslav Nešetřil, and Robin Thomas, editors, {\em A {Journey} {Through} {Discrete} {Mathematics}}, pages 1--30. Springer International Publishing, Cham, 2017.

\bibitem[BL15]{brody_adapt_2015}
Joshua Brody and Kasper~Gren Larsen.
\newblock Adapt or {Die}: {Polynomial} {Lower} {Bounds} for {Non}-{Adaptive} {Dynamic} {Data} {Structures}.
\newblock {\em THEORY OF COMPUTING}, 11:19, 2015.

\bibitem[BM13]{braverman_information_2013}
Mark Braverman and Ankur Moitra.
\newblock An information complexity approach to extended formulations.
\newblock In {\em Proceedings of the forty-fifth annual {ACM} symposium on {Theory} of {Computing}}, pages 161--170, Palo Alto California USA, June 2013. ACM.

\bibitem[Bra22]{braverman_communication_2022}
Mark Braverman.
\newblock Communication and information complexity.
\newblock {\em Proc. Int. Cong. Math. 2022}, 2022.

\bibitem[BYBJK11]{bar-yossef_index_2011}
Ziv Bar-Yossef, Yitzhak Birk, T.~S. Jayram, and Tomer Kol.
\newblock Index {Coding} {With} {Side} {Information}.
\newblock {\em IEEE Transactions on Information Theory}, 57(3):1479--1494, March 2011.

\bibitem[CEEP12]{chattopadhyay_little_2012}
Arkadev Chattopadhyay, Jeff Edmonds, Faith Ellen, and Toniann Pitassi.
\newblock A {Little} {Advice} {Can} {Be} {Very} {Helpful}.
\newblock In {\em Proceedings of the {Twenty}-{Third} {Annual} {ACM}-{SIAM} {Symposium} on {Discrete} {Algorithms}}, pages 615--625. Society for Industrial and Applied Mathematics, January 2012.

\bibitem[CGK19]{hofheinz_function-inversion_2019}
Henry Corrigan-Gibbs and Dmitry Kogan.
\newblock The {Function}-{Inversion} {Problem}: {Barriers} and {Opportunities}.
\newblock In Dennis Hofheinz and Alon Rosen, editors, {\em Theory of {Cryptography}}, volume 11891, pages 393--421. Springer International Publishing, Cham, 2019.
\newblock Series Title: Lecture Notes in Computer Science.

\bibitem[CGL15]{clifford_new_2015}
Raphael Clifford, A.~Grønlund, and K.~G. Larsen.
\newblock New {Unconditional} {Hardness} {Results} for {Dynamic} and {Online} {Problems}.
\newblock In {\em 2015 {IEEE} 56th {Annual} {Symposium} on {Foundations} of {Computer} {Science}}, pages 1089--1107, October 2015.

\bibitem[Cha90]{chazelle_lower_1990}
Bernard Chazelle.
\newblock Lower {Bounds} for {Orthogonal} {Range} {Searching}: {Part} {II}. {The} {Arithmetic} {Model}.
\newblock {\em J. ACM}, 37(3):439--463, 1990.

\bibitem[Che08a]{cherukhin_lower_2008}
Dmitriy~Yu Cherukhin.
\newblock Lower {Bounds} for {Boolean} {Circuits} with {Finite} {Depth} and {Arbitrary} {Gates}.
\newblock {\em Electronic Colloquium on Computational Complexity (ECCC)}, TR08-032, 2008.

\bibitem[Che08b]{hirsch_lower_2008}
Dmitriy~Yu. Cherukhin.
\newblock Lower {Bounds} for {Depth}-2 and {Depth}-3 {Boolean} {Circuits} with {Arbitrary} {Gates}.
\newblock In Edward~A. Hirsch, Alexander~A. Razborov, Alexei Semenov, and Anatol Slissenko, editors, {\em Computer {Science} – {Theory} and {Applications}}, volume 5010, pages 122--133. Springer Berlin Heidelberg, Berlin, Heidelberg, 2008.
\newblock Series Title: Lecture Notes in Computer Science.

\bibitem[CT06]{cover_elements_2006}
Thomas~M. Cover and Joy~A. Thomas.
\newblock {\em Elements of {Information} {Theory} ({Wiley} {Series} in {Telecommunications} and {Signal} {Processing})}.
\newblock Wiley-Interscience, New York, NY, USA, 2006.

\bibitem[DDPW83]{dolev_superconcentrators_1983}
Danny Dolev, Cynthia Dwork, Nicholas Pippenger, and Avi Wigderson.
\newblock Superconcentrators, generalizers and generalized connectors with limited depth.
\newblock In {\em Proceedings of the fifteenth annual {ACM} symposium on {Theory} of computing - {STOC} '83}, pages 42--51, Not Known, 1983. ACM Press.

\bibitem[DGW19]{dvir_static_2019}
Zeev Dvir, Alexander Golovnev, and Omri Weinstein.
\newblock Static data structure lower bounds imply rigidity.
\newblock In {\em Proceedings of the 51st {Annual} {ACM} {SIGACT} {Symposium} on {Theory} of {Computing}}, {STOC} 2019, pages 967--978, Phoenix, AZ, USA, June 2019. Association for Computing Machinery.

\bibitem[DL20]{dvorak_lower_2020}
Pavel Dvořák and Bruno Loff.
\newblock Lower {Bounds} for {Semi}-adaptive {Data} {Structures} via {Corruption}.
\newblock {\em LIPIcs, Volume 182, FSTTCS 2020}, 182:20:1--20:15, 2020.
\newblock Artwork Size: 15 pages, 547135 bytes ISBN: 9783959771740 Medium: application/pdf Publisher: Schloss Dagstuhl – Leibniz-Zentrum für Informatik.

\bibitem[DORS08]{dodis_fuzzy_2008}
Yevgeniy Dodis, Rafail Ostrovsky, Leonid Reyzin, and Adam Smith.
\newblock Fuzzy {Extractors}: {How} to {Generate} {Strong} {Keys} from {Biometrics} and {Other} {Noisy} {Data}.
\newblock {\em SIAM Journal on Computing}, 38(1):97--139, January 2008.

\bibitem[Dru12]{drucker_limitations_2012}
Andrew Drucker.
\newblock Limitations of {Lower}-{Bound} {Methods} for the {Wire} {Complexity} of {Boolean} {Operators}.
\newblock In {\em 2012 {IEEE} 27th {Conference} on {Computational} {Complexity}}, pages 170--180, June 2012.
\newblock ISSN: 1093-0159.

\bibitem[Fre81]{fredman_lower_1981}
Michael~L. Fredman.
\newblock A {Lower} {Bound} on the {Complexity} of {Orthogonal} {Range} {Queries}.
\newblock {\em J. ACM}, 28(4):696--705, 1981.

\bibitem[GHK{\etalchar{+}}13]{gal_tight_2013}
Anna Gál, Kristoffer~Arnsfelt Hansen, Michal Koucký, Pavel Pudlák, and Emanuele Viola.
\newblock Tight {Bounds} on {Computing} {Error}-{Correcting} {Codes} by {Bounded}-{Depth} {Circuits} {With} {Arbitrary} {Gates}.
\newblock {\em IEEE Transactions on Information Theory}, 59(10):6611--6627, October 2013.

\bibitem[GPRW22]{golovnev_polynomial_2022}
Alexander Golovnev, Gleb Posobin, Oded Regev, and Omri Weinstein.
\newblock Polynomial {Data} {Structure} {Lower} {Bounds} in the {Group} {Model}.
\newblock {\em SIAM Journal on Computing}, pages FOCS20--74--FOCS20--101, March 2022.

\bibitem[GRS22]{guruswami_essential_2022}
Venkatesan Guruswami, Atri Rudra, and Madhu Sudan.
\newblock {\em Essential {Coding} {Theory}}.
\newblock 2022.

\bibitem[JS10]{jukna_circuits_2010}
S.~Jukna and G.~Schnitger.
\newblock Circuits with arbitrary gates for random operators, 2010.
\newblock Version Number: 1.

\bibitem[JS11]{jukna_min-rank_2011}
Stasys Jukna and Georg Schnitger.
\newblock Min-rank conjecture for log-depth circuits.
\newblock {\em J. Comput. Syst. Sci.}, 77(6):1023--1038, 2011.

\bibitem[Juk10]{jukna_representing_2010}
Stasys Jukna.
\newblock Representing (0, 1)-matrices by boolean circuits.
\newblock {\em Discrete Mathematics}, 310(1):184--187, January 2010.

\bibitem[Juk12]{jukna_boolean_2012}
Stasys Jukna.
\newblock {\em Boolean function complexity: advances and frontiers}, volume~27.
\newblock Springer Science \& Business Media, 2012.

\bibitem[KU11]{kedlaya_fast_2011}
Kiran~S. Kedlaya and Christopher Umans.
\newblock Fast {Polynomial} {Factorization} and {Modular} {Composition}.
\newblock {\em SIAM J. Comput.}, 40(6):1767--1802, 2011.

\bibitem[KW20]{ko_adaptive_2020}
Young~Kun Ko and Omri Weinstein.
\newblock An {Adaptive} {Step} {Toward} the {Multiphase} {Conjecture}.
\newblock In {\em 2020 {IEEE} 61st {Annual} {Symposium} on {Foundations} of {Computer} {Science} ({FOCS})}, pages 752--761, Durham, NC, USA, November 2020. IEEE.

\bibitem[Lar14]{larsen_range_2014}
Kasper~Green Larsen.
\newblock On {Range} {Searching} in the {Group} {Model} and {Combinatorial} {Discrepancy}.
\newblock {\em SIAM J. Comput.}, 43(2):673--686, 2014.

\bibitem[Lok09]{lokam_complexity_2009}
Satyanarayana~V. Lokam.
\newblock Complexity {Lower} {Bounds} using {Linear} {Algebra}.
\newblock {\em Found. Trends Theor. Comput. Sci.}, 4(1-2):1--155, 2009.

\bibitem[Lup56]{lupanov_rectifier_1956}
Oleg~B Lupanov.
\newblock On rectifier and switching-and-rectifier schemes.
\newblock In {\em Dokl. {Akad}. {Nauk} {SSSR}}, volume 111, pages 1171--1174, 1956.
\newblock Issue: 6.

\bibitem[LWY20]{larsen_crossing_2020}
Kasper~Green Larsen, Omri Weinstein, and Huacheng Yu.
\newblock Crossing the {Logarithmic} {Barrier} for {Dynamic} {Boolean} {Data} {Structure} {Lower} {Bounds}.
\newblock {\em SIAM Journal on Computing}, 49(5):STOC18--323--STOC18--367, January 2020.

\bibitem[LY25]{larsen_super-logarithmic_2025}
Kasper~Green Larsen and Huacheng Yu.
\newblock Super-{Logarithmic} {Lower} {Bounds} for {Dynamic} {Graph} {Problems}.
\newblock {\em SIAM Journal on Computing}, pages FOCS23--42--FOCS23--69, February 2025.

\bibitem[Mil93]{miltersen_bit_1993}
Peter~Bro Miltersen.
\newblock The {Bit} {Probe} {Complexity} {Measure} {Revisited}.
\newblock In {\em {STACS} 1993}, pages 662--671, 1993.

\bibitem[MNSW98]{miltersen_data_1998}
Peter~Bro Miltersen, Noam Nisan, Shmuel Safra, and Avi Wigderson.
\newblock On {Data} {Structures} and {Asymmetric} {Communication} {Complexity}.
\newblock {\em J. Comput. Syst. Sci.}, 57(1):37--49, 1998.

\bibitem[NRR20]{natarajan_ramamoorthy_equivalence_2020}
Sivaramakrishnan Natarajan~Ramamoorthy and Cyrus Rashtchian.
\newblock Equivalence of {Systematic} {Linear} {Data} {Structures} and {Matrix} {Rigidity}.
\newblock {\em LIPIcs, Volume 151, ITCS 2020}, 151:35:1--35:20, 2020.
\newblock Artwork Size: 20 pages, 605863 bytes ISBN: 9783959771344 Medium: application/pdf Publisher: Schloss Dagstuhl – Leibniz-Zentrum für Informatik Version Number: 1.0.

\bibitem[Pat10]{patrascu_mihai_towards_2010}
Mihai Patrascu.
\newblock Towards polynomial lower bounds for dynamic problems.
\newblock In Leonard~J. Schulman, editor, {\em Proceedings of the 42nd {ACM} {Symposium} on {Theory} of {Computing}, {STOC} 2010, {Cambridge}, {Massachusetts}, {USA}, 5-8 {June} 2010}, pages 603--610. ACM, 2010.

\bibitem[PD06]{patrascu_logarithmic_2006}
Mihai Patrascu and Erik~D. Demaine.
\newblock Logarithmic {Lower} {Bounds} in the {Cell}-{Probe} {Model}.
\newblock {\em SIAM J. Comput.}, 35(4):932--963, April 2006.

\bibitem[Pip77]{pippenger_superconcentrators_1977}
Nicholas Pippenger.
\newblock Superconcentrators.
\newblock {\em SIAM Journal on Computing}, 6(2):298--304, June 1977.

\bibitem[PR94]{pudlak_combinatorial-algebraic_1994}
P.~Pudlák and V.~Rödl.
\newblock Some combinatorial-algebraic problems from complexity theory.
\newblock {\em Discrete Mathematics}, 136(1):253--279, December 1994.

\bibitem[PTW08]{panigrahy_geometric_2008}
Rina Panigrahy, Kunal Talwar, and Udi Wieder.
\newblock A {Geometric} {Approach} to {Lower} {Bounds} for {Approximate} {Near}-{Neighbor} {Search} and {Partial} {Match}.
\newblock In {\em 2008 49th {Annual} {IEEE} {Symposium} on {Foundations} of {Computer} {Science}}, pages 414--423, Philadelphia, PA, USA, October 2008. IEEE.

\bibitem[Pud94]{pudlak_communication_1994}
P.~Pudlak.
\newblock Communication in bounded depth circuits.
\newblock {\em Combinatorica}, 14(2):203--216, June 1994.

\bibitem[Pǎ11]{patrascu_unifying_2011}
Mihai Pǎtraşcu.
\newblock Unifying the {Landscape} of {Cell}-{Probe} {Lower} {Bounds}.
\newblock {\em SIAM J. Comput.}, 40(3):827--847, 2011.

\bibitem[Ram20]{ramya_recent_2020}
C.~Ramya.
\newblock Recent {Progress} on {Matrix} {Rigidity} -- {A} {Survey}, September 2020.
\newblock arXiv:2009.09460 [cs].

\bibitem[RS03]{raz_lower_2003}
Ran Raz and Amir Shpilka.
\newblock Lower {Bounds} for {Matrix} {Product} in {Bounded} {Depth} {Circuits} with {Arbitrary} {Gates}.
\newblock {\em SIAM Journal on Computing}, 32(2):488--513, January 2003.

\bibitem[RTS00]{radhakrishnan_bounds_2000}
Jaikumar Radhakrishnan and Amnon Ta-Shma.
\newblock Bounds for {Dispersers}, {Extractors}, and {Depth}-{Two} {Superconcentrators}.
\newblock {\em SIAM Journal on Discrete Mathematics}, 13(1):2--24, January 2000.

\bibitem[Sha49]{shannon_synthesis_1949}
Claude.~E. Shannon.
\newblock The synthesis of two-terminal switching circuits.
\newblock {\em The Bell System Technical Journal}, 28(1):59--98, January 1949.
\newblock Conference Name: The Bell System Technical Journal.

\bibitem[Sie04]{siegel_universal_2004}
Alan Siegel.
\newblock On {Universal} {Classes} of {Extremely} {Random} {Constant}-{Time} {Hash} {Functions}.
\newblock {\em SIAM J. Comput.}, 33(3):505--543, 2004.

\bibitem[Tho13]{thorup_mihai_2013}
Mikkel Thorup.
\newblock Mihai {PǎTraşCu}: {Obituary} and {Open} {Problems}.
\newblock {\em SIGACT News}, 44(1):110--114, March 2013.

\bibitem[Val75]{valiant_non-linear_1975}
Leslie~G. Valiant.
\newblock On non-linear lower bounds in computational complexity.
\newblock In {\em Proceedings of seventh annual {ACM} symposium on {Theory} of computing - {STOC} '75}, pages 45--53, Albuquerque, New Mexico, United States, 1975. ACM Press.

\bibitem[Val77]{goos_graph-theoretic_1977}
Leslie~G. Valiant.
\newblock Graph-theoretic arguments in low-level complexity.
\newblock In G.~Goos, J.~Hartmanis, P.~Brinch~Hansen, D.~Gries, C.~Moler, G.~Seegmüller, J.~Stoer, N.~Wirth, and Jozef Gruska, editors, {\em Mathematical {Foundations} of {Computer} {Science} 1977}, volume~53, pages 162--176. Springer Berlin Heidelberg, Berlin, Heidelberg, 1977.

\bibitem[Val92]{valiant_why_1992}
Leslie~G. Valiant.
\newblock Why is {Boolean} {Complexity} {Theory} {Difficult}?
\newblock In {\em Poceedings of the {London} {Mathematical} {Society} {Symposium} on {Boolean} {Function} {Complexity}}, pages 84--94, New York, NY, USA, 1992. Cambridge University Press.
\newblock event-place: London, United Kingdom.

\bibitem[Vio18]{viola_lower_2018}
Emanuele Viola.
\newblock Lower bounds for data structures with space close to maximum imply circuit lower bounds.
\newblock In {\em {ECCC}}, volume~25, 2018.

\bibitem[Yao81]{yao_should_1981}
Andrew Chi-Chih Yao.
\newblock Should {Tables} {Be} {Sorted}?
\newblock {\em J. ACM}, 28(3):615--628, July 1981.

\end{thebibliography}

\appendix
\section{Omitted Proof}

We restate the lemma.
\begin{replemma}{lem:S}
    If $|\cP| = \ell = \frac{|\cQ|}{100} = \frac{m}{100}$ then
    \begin{equation*}
        I ( M_{\cP_J} ; Z_{Alice} ) = E_{z_{Alice}} \left[ D( M_{\cP_J} |_{Z_{Alice} = z_{Alice} } || M_{\cP_J} ) \right]  \leq 3 \cdot t \log s 
    \end{equation*}
\end{replemma}
\begin{proof}
    We plug in the definition of $Z_{Alice}$.
    \begin{align*}
        & I ( M_{\cP_J} ; Z_{Alice} ) = I ( M_{\cP_J} ; \cP, J, P_{ \cP_{<J} }, M_{ \cP_{<J} }, P_{ \cP_{J} }, ) \\
        & = I ( M_{\cP_J} ; \cP, J, P_{ \cP_{<J} }, M_{ \cP_{<J} } ) + \underbrace{ I (  M_{\cP_J} ; P_{ \cP_{J} }  | \cP, J, P_{ \cP_{<J} }, M_{ \cP_{<J} } )}_{ \leq t \log s + t + 1} \\
        & \leq I ( M_{\cP_J} ; \cP, J, P_{ \cP_{<J} }, M_{ \cP_{<J} } ) + t \log s + t + 1
    \end{align*}
    Now we upper bound $I ( M_{\cP_J} ; \cP, J, P_{ \cP_{<J} }, M_{ \cP_{<J} } )$ term. We first know that due to $J$ and $\cP$ being chosen independently at random
    \begin{equation*}
        I ( M_{\cP_J} ;  \cP J, P_{ \cP_{<J} }, M_{ \cP_{<J} } ) = I ( M_{\cP_J} ; \cP, P_{ \cP_{<J} }, M_{ \cP_{<J} }  | \cP, J )
    \end{equation*}
    Now consider a fixed $J = j$. 
    \begin{align*}
        & I ( M_{\cP_J} ; \cP, P_{ \cP_{< j} },  M_{ \cP_{<J} } | \cP, J = j )  = I ( M_{\cP_J} ; P_{ \cP_{< j} },  M_{ \cP_{<J} } | \cP ) = I ( M_{\cP_J} ; P_{ \cP_{< j} } | M_{ \cP_{<J} }, \cP) 
    \end{align*}
    Now consider fixed $\cP_{ < j } = \rho_{<j}$. Then the above term becomes
    \begin{align*}
        & I ( M_{\cP_J} ;  P_{\rho_{<j}} | M_{ \rho_{<j} } , \cP_{<j} = \rho_{<j} , \cP_j ) = \frac{1}{m - (j-1)} \sum_{i \notin \rho_{<j} } I ( M_i ;  P_{\rho_{<j}} | M_{ \rho_{<j} } ) \\
        & \leq \frac{1}{m - (j-1) } I ( S_{[m] - \rho_{<j}} ; P_{\rho_{<j}} | M_{ \rho_{<j} } ) \leq \frac{j}{m - \ell}  ( t \log s + t + 1 ) \leq 0.1 t \log s 
    \end{align*}
    due to \pref{fact:chainrule1} and the last inequality holds due to our choice of $\ell$

    As the inequality holds for any fixed $j$ and $\cP_{ < j } = \rho_{<j}$, 
    \begin{equation*}
        I ( M_{\cP_J} ; Z_{Alice} ) \leq t \log s + t + 1 + 0.1 t \log s \leq 3 \cdot t \log s
    \end{equation*}
    completing the proof of our lemma.

\end{proof}

\end{document}